\def\R{\mathbb{R}}
\def\argmin{\mathop{\rm arg\, min}}
\def\eps{\varepsilon}
\def\A{{\mathcal A}}
\def\E{{\mathcal E}}
\def\C{{\mathcal C}}
\def\P{{\mathcal P}}
\newcounter{example}
\newenvironment{example}[1][]{\refstepcounter{example}\par\medskip
	\noindent \textit{Example~\theexample. #1} \rmfamily}{\medskip}
\newtheorem{definition}{Definition}
\newtheorem{theorem}{Theorem}
\newtheorem{proposition}{Proposition}
\newtheorem{lemma}{Lemma}
\theoremstyle{remark}
\newtheorem{remark}{Remark}
\tikzstyle{RectObject}=[rectangle,fill=white,draw,line width=0.2mm]
\tikzstyle{line}=[draw]
\tikzstyle{arrow}=[draw, -latex]
\definecolor{DukeBlue}{HTML}{001A57}
\definecolor{DarkRed}{rgb}{0.75, 0.0, 0.0}
\definecolor{DarkGreen}{rgb}{0.0, 0.5, 0.0}
\title{\LARGE \bf
	Curvature of Hypergraphs via Multi-Marginal Optimal Transport
}
\author{Shahab Asoodeh$^{1}$, Tingran Gao$^{2}$, and James Evans$^{3}$
	\thanks{$^{1}$Computation Institute and Institute of Genomics and System Biology, The University of Chicago, Chicago, IL 60637 
		{\tt \small shahab@uchicago.edu}}
	\thanks{$^{2}$Department of Statistics,
		The University of Chicago, Chicago, IL 60637
		{\tt\small tingrangao@galton.uchicago.edu}}%
	\thanks{$^{3}$Computation Institute and Department of Sociology, The University of Chicago, Chicago, IL 60637
		{\tt\small jevans@uchicago.edu}}
}
\begin{document}

	\maketitle

	\begin{abstract}
		We introduce a novel definition of curvature for hypergraphs, a natural generalization of graphs, by introducing a multi-marginal optimal transport problem for a naturally defined random walk on the hypergraph. This curvature, termed \emph{coarse scalar curvature}, generalizes a recent definition of Ricci curvature for Markov chains on metric spaces by Ollivier [Journal of Functional Analysis 256 (2009) 810-864], and is related to the scalar curvature when the hypergraph arises naturally from a Riemannian manifold. We investigate basic properties of the coarse scalar curvature and obtain several bounds. Empirical experiments indicate that coarse scalar curvatures are capable of detecting ``bridges'' across connected components in hypergraphs, suggesting it is an appropriate generalization of curvature on simple graphs.
	\end{abstract}
	\section{Introduction}
	Complex systems or datasets are often modeled as weighted simple graphs in network science problems. While edges in these simple graphs \emph{qualitatively} characterize similarity or adjacency relations among entities represented by the graph vertices, the edge weights are frequently used to \emph{quantify} the nature of the interactions between pairs of nodes. Simple yet powerful as these simple graph models are, many recent work reported the importance of understanding \textit{higher-order interactions} among more than a pair of nodes, rendering simple graphs insufficient as a natural model for capturing the network structure information in these practices. Applications of this type include spatial network \cite{Hyper_Spatial}, image tagging \cite{Hyper_ImageTagging}, image retrieval \cite{Hyper_ImageRetrieval}, cellular networks \cite{HyperGraph_Cellular}, and co-authorship network \cite{Hyper_Co_Authorship}, to name just a few. \emph{Hypergraphs} have been proposed as a replacement to tackle this difficulty.

	Roughly speaking, a hypergraph $H=(V, \E)$ consists of a finite set $V$ of vertices and a set of \textit{hyperedges} $\E\subseteq 2^V$ -- just as edges in a simple graph that can be identified with vertex pairs, hyperedge $E\in \E$ are subsets of $V$. The ubiquitous influence in modeling complex networks fostered numerous recent developments in the theory and algorithms of hypergraphs, including extensive studies of the spectral and algebraic properties such as hypergraph Laplacian \cite{Hyper_Laplacian}, hypergraph partitioning \cite{Hyper_partitoining}, Cheeger's inequality for hypergraph \cite{Hyper_Cheegar}, and spectrum of hypergraphs\cite{Hyper_Spectrum}.
	
	Among many tools developed for better understanding the geometry of graphs, the \emph{graph Ricci curvature} \cite{Ollivier1,Ricci_Graph,weber,Bauer,Jost2014} has attracted an increasing amount of interest in the past years. In his original work \cite{Ollivier1}, Ollivier defined \emph{coarse Ricci curvature} for metric measure spaces, including simple graphs and Markov chains as special cases. In a nutshell, the coarse Ricci curvature summarizes the behavior of shortest paths with close-by starting points and parallel initial directions: two such paths tend to get closer to each other in a metric space of positive Ricci curvature, and further if the space is negatively curved. On simple graphs, this notion of Ricci curvature has found applications ranging from bounding the chromatic number \cite{OllivierRicci_Colouring_Paeng} and analyzing the Internet topology \cite{Internet_Ricco} to measuring the stability in financial networks \cite{Ricci_Risk},  brain structural connectivity \cite{Ricci_Brain}, and similarity of networks \cite{Ricci_Cancer}.
	
	This paper proposes a novel definition of curvature for hypergraphs by generalizing Ollivier's coarse Ricci curvature through a multi-marginal optimal transport framework (see e.g. \cite{CE2010,MultiMarginalOT4,MultiMarginalOT00}). Analogous to coarse Ricci curvatures, our definition of hypergraph curvature is grounded upon differential geometric intuitions, and reduces to the graph Ricci curvature when the hypergraph is a simple graph. While the coarse Ricci curvature is defined for pairs of points, which are naturally identified with edges in simple graphs, we need to adjust the construction to account for $\geq 3$ vertices joined by a hyperedge simultaneously; consequently, the geometric information captured in our definition is a summary of a small neighborhood enclosing all the end nodes of a hyperedge, as opposed to the directional information revealed by the Ricci curvature. In fact, our construction corresponds to the \emph{scalar curvature} of Riemannian manifolds under appropriate manifold assumptions analogous to the manifold Ricci curvature example in \cite{Ollivier1}. 

	
	
	

	The rest of the paper is structured as follows. We list a few useful notation in Section~\ref{sec:notations}.    After a brief review of coarse Ricci curvature in Section~\ref{sec:graph-ricci-curv}, we define \emph{coarse scalar curvature} in Section~\ref{sec:coarse-scal-curv}. Based on this definition, we then propose our notion of hypergraph curvature and investigate its properties in Section~\ref{sec:hypergraph-case}. We derive a closed form for the curvature of complete uniform hypergraphs in  Section~\ref{sec:Uniform_Hyper} and a general lower bound for \textit{hyperpaths} in Section~\ref{Sec:Example}. In Section~\ref{sec:coarse-scalar-curvature-manifold}, we provide a detailed consistency result for the definition of coarse scalar curvature in a Riemannian manifold setting. Finally, we conclude the paper in Section~\ref{sec:Conclusion}.

	\section{Hypergraph Curvature via Multi-Marginal Optimal Transport}
	
	\subsection{Notations}
	\label{sec:notations}
	For each vertex $i$ of a hypergraph $H=(V, \E)$, we use $d_i\coloneqq \sum_{E\in \E}1_{\{i\in E\}}$ to denote the degree of vertex $i$ and $d(E)\coloneqq \sum_{r\in V}1_{\{r\in E\}}$ to denote the cardinality of hyperedge $E\in \mathcal{E}$. Similar to graph, we use $N(i)$ for the neighbors of vertex $i$, i.e., $N(i)=\{j\in V: \exists E\in \E, (i, j)\in E\}$. For a pair of vertices $i$ and $j$ of a hypergraph, $d(i,j)$ denotes the shortest distance, i.e., $d(i, j)=r$ if there exist $r$ interesting hyperedges $E_1, \dots, E_r$, such that $i\in E_1$, $j\in E_r$, and $E_k\cap E_{k+1}\neq \emptyset$ for $1\leq k\leq r-1$.
	
	We will always denote $M$ for a $d$-dimensional Riemannian manifold.  We use $\exp_x(\cdot):T_xM\to M$ to denote the exponential map. For any $x\in M$ and $v\in T_xM$ with $\left\| v \right\|=1$, we denote $\mathrm{Ric}_x\left( v,v \right)$ for the \emph{Ricci curvature at $x\in M$ in the direction of $v\in T_xM$}, defined as
	\begin{equation}
	\label{eq:def-ricci}
	\mathrm{Ric}_x\left( v,v \right)=\frac{1}{d-1}\sum_{i=2}^d\left\langle R \left( v,e_i \right)v,e_i \right\rangle
	\end{equation}
	where $v,e_2,\cdots,e_d$ constitutes an orthonormal basis for $T_xM$, and $R$ is the Riemannian curvature tensor. We will often write $\mathrm{Ric}\left( v,v \right)$ for $\mathrm{Ric}_x \left( v,v \right)$ when the point $x$ is fixed throughout the discussion. Averaging out $\mathrm{Ric}_x \left( v,v \right)$ for $v$ ranging in an orthonormal basis of $T_xM$ gives rise to the \emph{scalar curvature at $x\in M$:}
	\begin{equation}
	\label{eq:def-scalar}
	\mathrm{Scal}\left( x \right)=\frac{1}{d}\sum_{i=1}^d\mathrm{Ric}_x\left( z_i,z_i \right)
	\end{equation}
	where $z_1,\cdots,z_d$ constitutes an orthonormal basis for $T_xM$. Equivalently, the scalar curvature can be obtained from averaging out the Ricci curvature over the unit sphere in the tangent plane, i.e. (c.f. \cite[Exercise 4.9]{doCarmo1992RG})
	\begin{equation}
	\label{eq:equiv-def-scalar}
	\mathrm{Scal}\left( x \right)=\frac{1}{\omega_{d-1}}\int_{S_{d-1}\left( 0 \right)}\mathrm{Ric}_x \left( v,v \right)\,\mathrm{d}S_{d-1}\left( v \right)
	\end{equation}
	where $S_{d-1}\left( 0 \right)$ is the unit sphere in $T_xM$ and $\omega_{d-1}$ is the volume of the standard $\left( d-1 \right)$-dimensional sphere in $\mathbb{R}^d$. 
	
	For any set $A\in \mathbb{R}^n$, we let $\mathscr{P}\left( A \right)$ denote the set of all probability measures defined on $A$.
	
	\subsection{Graph Ricci Curvature}
	\label{sec:graph-ricci-curv}
	
	Given a graph $G=(V, E)$ and a pair of vertices $x, y\in V$, Ollivier  \cite{Ollivier1} defined the curvature of edge $(x,y)\in E$ as
	\begin{equation}\label{Def_Ollivier}
	\kappa(x,y)\coloneqq 1-\frac{W(m_x, m_y)}{d(x,y)},
	\end{equation}
	where $d(x,y)$ is the shortest distance from $x$ to $y$, $m_i$ is the uniform random walk starting at $i\in V$, and $W(m_x, m_y)$ is the \textit{Wasserstein distance} between $m_x$ and $m_y$ given by 
	\begin{equation}\label{Def_Wasserstein}
	W(m_x, m_y)\coloneqq\inf_{\pi\in \Pi(m_x, m_y)}\sum_{(r, s)\in V^2}d(r, s)\pi(r,s),
	\end{equation}
	where $\Pi(m_x, m_y)$ is the set of all joint distributions having $m_x$ and $m_y$ as marginals (i.e., the set of all couplings of $m_x$ and $m_y$). 
	He then showed that  positive curvature is equivalent to the contraction of the random walk under Wasserstein's distance which in turn leads to the existence of a unique stationary distribution. 
	
	To justify that \eqref{Def_Ollivier} is a valid discrete version of Ricci curvature, Ollivier argued as follows. In a $d$-dimensional Riemannian manifold $(M,d_M)$,  consider the random walk (c.f. Definition~\ref{Definition_RandomWalk}) $\text{d}m_x^\eps$ for each $x\in M$ and  $\eps>0$ given by
	\begin{equation}
	\label{eq:random-walk-riemannian-manifold}
	\text{d}m_x^\eps(s)\coloneqq \frac{1_{B\left(x, \eps\right)}\left( s \right)}{\text{vol}(B(x, \eps))}\text{d}\text{vol}(s),
	\end{equation}
	where $B \left( x,\eps \right)=\left\{ z\in M\mid d_M \left( x,z \right)<\eps \right\}$ is the metric ball with radius $\eps$ centered at $x$ and $1_{\{x\in A\}}$ is the indicator function of set $A$. It is then shown \cite[Example 7]{Ollivier1} that for sufficiently small $\delta=d_M\left(x,y\right)>0$
	\begin{equation}\label{Manifold_Ricci}
	1-\frac{W_1 \left( m^{\eps}_x,m^{\eps}_y \right)}{d_M \left( x,y \right)} = \frac{\eps^2 \mathrm{Ric}(v,v)}{2(n+2)} + O(\eps^3 + \eps^2\delta),
	\end{equation}
	where $v\in T_x M$ is a unit tangent vector at $x$ such that $\exp_x(\delta v)=y$ and where $W_1 \left( m^{\eps}_x,m^{\eps}_y  \right)$ is the \emph{$L^1$-Wasserstein distance} between $m^{\eps}_x$ and $m^{\eps}_y$
	\begin{equation}\label{Wasserstein_L1}
	W_1 \left( m^{\eps}_x,m^{\eps}_y \right):=\inf_{\pi\in \Pi \left( m^{\eps}_x,m^{\eps}_y \right)}\int_{M\times M}d_M \left( x,y \right)\,\mathrm{d}\pi \left( x,y \right).
	\end{equation}
	
	In Riemannian geometry, Ricci curvature $\mathrm{Ric}_x(v,v)$ is, up to a scaling factor, the average of the sectional curvatures of all two-dimensional subspaces of $T_xM$ passing through $v$ \cite[\textsection 4]{doCarmo1992RG} and hence, it measures the coupling of the random walks.  In this context, if the curvature of a point  in a manifold is zero, it is locally on a Euclidean space, positive if it is locally on an sphere and negative if it locally on a hyperbolic space.
	
	\subsection{Coarse Scalar Curvature for Metric Spaces}
	\label{sec:coarse-scal-curv}
	We begin our construction of coarse scalar curvature by defining \emph{random walks} on a metric space, which was used in \cite{Ollivier1} to define coarse Ricci curvatures.
	\begin{definition}[\cite{Ollivier1} Definition 1] \label{Definition_RandomWalk}
		Let $\left( M,d_M \right)$ be a Polish metric space equipped with its Borel $\sigma$-algebra. A \emph{random walk} $m$ on $M$ is a family of probability measures $\left\{ m_x\mid x\in M \right\}$ satisfying
		\begin{itemize}
			\item The map $x\mapsto m_x$ is measurable;
			\item Each $m_x$ has finite first moment.
		\end{itemize}
	\end{definition}
	
	\begin{definition}[Coarse Scalar Curvature]\label{Def:Coarse_Scalar}
		For a collection of $n$ points $\mathscr{X}_n:=\left\{x_1,\cdots,x_n\right\}$ in a metric space $\left( M, d_M \right)$ with random walk $m:=\left\{ m_x\mid x\in M \right\}$, define the \emph{coarse scalar curvature} for $\mathscr{X}_n$ by
		\begin{equation}
		\label{eq:coarse-scalar-curvature}
		\kappa \left( \left\{ \mathscr{X}_n \right\} \right):=1-\frac{W_1 \left( \mathscr{X}_n \right)}{\displaystyle c(x_1, \dots, x_n)},
		\end{equation}
		where $W_1 \left( \mathscr{X}_n \right)$ is the minimum of the multi-marginal optimal transport problem
		\begin{equation*}
		W_1 \left( \mathscr{X}_n \right):=\inf_{\pi\in \Pi \left( \mathscr{X}_n \right)}\int_{M\times\cdots\times M}c \left( \xi_1,\cdots,\xi_n \right)\,\mathrm{d}\pi \left( \xi_1,\cdots,\xi_n \right),
		\end{equation*}
		with
		\begin{equation*}
		\Pi \left( \mathscr{X}_n \right):=\Pi \left( m_1,\cdots,m_n \right)\\
		=\left\{ \pi\in \mathscr{P}\left( M\times\cdots\times M \right)\mid\left(\mathrm{Proj}_k\right)_{\#}\pi =m_{x_k},\forall 1\leq k\leq n\right\},
		\end{equation*}
		with $(A)_{\#}\pi$ being the push-forward of measure $\pi$ under mapping $A$ and 
		\begin{equation}\label{Def:Cost}
		c \left( \xi_1,\cdots,\xi_n \right):=\inf_{z\in M}\sum_{i=1}^n d_M \left( \xi_i,z \right).
		\end{equation}
	\end{definition}
	
	It must be mentioned that the multi-marginal optimal transport problem is first studied in Gangbo and \'Swi\c ech \cite{MultiMarginalOT00} where they showed the necessary conditions for the existence and uniqueness of the optimizer when $c(\xi_1,\cdots,\xi_n)$ is the sum of pairwise $\ell^2$ distances. 
	The coarse scalar curvature is closely tied to the multi-marginal optimal transport problem among $n\geq 2$ probability measures $m_{x_1},\cdots,m_{x_n}$, which is a direct generalization of the pairwise Wasserstein distance between measures. We will make frequent use of the following well-known facts in the theory of multi-marginal optimal transport problems (see e.g. \cite{CE2010,MultiMarginalOT00,MultiMarginalOT4,Barycenter_Agueh,Kellerer1984} and the references therein):
	\begin{proposition}[Multi-marginal and barycenter \cite{CE2010}]\label{Prop:Equivalent_barycenter}
		The minimum of the multi-marginal optimal transport problem is equal to the minimum of the Wasserstein barycenter problem, i.e.
		\begin{equation}\label{Barycenter_Definition}
		W_1 \left( \mathscr{X}_n \right)=\inf_{\nu\in\mathscr{P}\left( M \right)}\sum_{i=1}^n W_1 \left( m_{x_i},\nu \right)
		\end{equation}
		where the $L^1$-Wasserstein distance $W_1 \left( m,\nu \right)$ for any $m,\nu\in\mathscr{P}\left( M \right)$ is defined in \eqref{Wasserstein_L1}.
	\end{proposition}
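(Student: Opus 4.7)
The plan is to establish the equivalence by proving two inequalities, each via an explicit construction.

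For the inequality $W_1(\mathscr{X}_n)\leq \inf_\nu \sum_i W_1(m_{x_i},\nu)$, I would fix any $\nu\in\mathscr{P}(M)$ and, for each $i\in\{1,\ldots,n\}$, take an optimal coupling $\pi_i\in\Pi(m_{x_i},\nu)$ realizing $W_1(m_{x_i},\nu)$. Disintegrating each $\pi_i$ against its second marginal gives conditional kernels $\pi_i(\mathrm{d}\xi_i\mid z)$, and I would glue the couplings along $\nu$ by setting
\begin{equation*}
\pi(\mathrm{d}\xi_1,\cdots,\mathrm{d}\xi_n):=\int_M \pi_1(\mathrm{d}\xi_1\mid z)\cdots \pi_n(\mathrm{d}\xi_n\mid z)\,\nu(\mathrm{d}z).
\end{equation*}
This $\pi$ lies in $\Pi(\mathscr{X}_n)$. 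Since $c(\xi_1,\ldots,\xi_n)\leq \sum_{i=1}^n d_M(\xi_i,z)$ for every $z\in M$, integrating against $\pi$ and using the marginal identities yields $\int c\,\mathrm{d}\pi \leq \sum_i W_1(m_{x_i},\nu)$. Taking infimum over $\nu$ gives the desired bound.

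For the reverse inequality, I would start from a (near) optimizer $\pi$ of the multi-marginal problem and construct a candidate barycenter $\nu$ as a push-forward. Specifically, pick a measurable selector $z^\star:M^n\to M$ with $z^\star(\xi_1,\ldots,\xi_n)\in\argmin_{z\in M}\sum_i d_M(\xi_i,z)$, and set $\nu:=(z^\star)_{\#}\pi$. The joint laws $\pi_i:=(\mathrm{Proj}_i,z^\star)_{\#}\pi$ are then elements of $\Pi(m_{x_i},\nu)$, hence
\begin{equation*}
\sum_{i=1}^n W_1(m_{x_i},\nu)\leq \sum_{i=1}^n \int d_M(\xi_i,z^\star(\xi_1,\ldots,\xi_n))\,\mathrm{d}\pi=\int c(\xi_1,\ldots,\xi_n)\,\mathrm{d}\pi=W_1(\mathscr{X}_n),
\end{equation*}
which combined with the first direction gives equality.

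The main obstacle is the second step, where two delicate issues arise: existence of a minimizer in the definition of $c$ (a Fermat-point problem on a Polish space) and measurability of the selection $z^\star$. In a general Polish setting one does not have compactness of $M$, so I would argue existence under mild local compactness plus a tightness/coercivity property of $\xi\mapsto\sum_i d_M(\xi_i,\xi)$, or else replace the pointwise infimum by a measurable $\varepsilon$-almost minimizer and let $\varepsilon\downarrow 0$. Measurability can then be obtained from the Kuratowski--Ryll-Nardzewski selection theorem applied to the closed-valued (or $\varepsilon$-sublevel) multifunction $(\xi_1,\ldots,\xi_n)\mapsto\argmin_z \sum_i d_M(\xi_i,z)$, which is measurable because the cost is jointly continuous. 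With this selector in hand, the push-forward construction above goes through, and the chain of inequalities closes the proof.
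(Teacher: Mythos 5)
The paper does not actually prove this proposition --- it is stated as a known fact and attributed to \cite{CE2010}, so there is no in-paper argument to compare against. Your proof is correct and is essentially the standard argument from that literature: the gluing lemma along the common marginal $\nu$ for the inequality $W_1(\mathscr{X}_n)\leq\inf_\nu\sum_i W_1(m_{x_i},\nu)$, and a measurable selection of the Fermat point pushed forward to produce a competitor barycenter for the reverse inequality. Both constructions are sound, and you correctly identify the only genuinely delicate points (existence of the minimizer defining $c$ and measurability of the selector), together with the right escape hatch of measurable $\varepsilon$-almost minimizers when $M$ is not proper.

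Two small remarks that would streamline the write-up. First, the cost $c(\xi_1,\ldots,\xi_n)=\inf_z\sum_i d_M(\xi_i,z)$ is automatically $1$-Lipschitz in each argument (apply the triangle inequality inside the infimum), hence continuous; this makes the measurability of $c$ immediate and also guarantees, via tightness and lower semicontinuity, that the multi-marginal infimum is attained on a Polish space --- though your use of a near-optimal $\pi$ sidesteps the need for this anyway. Second, in the $\varepsilon$-minimizer variant the chain of inequalities closes with an extra additive $\varepsilon$ (or $n\varepsilon$, depending on normalization) that vanishes in the limit; it is worth stating that explicitly so the final equality is not asserted from an exact selector that may not exist. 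With those cosmetic adjustments the argument is a complete, self-contained proof of the proposition that the paper only cites.
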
 
	The minimization problem in \eqref{Barycenter_Definition} is called  Wasserstein barycenter of $\mathscr{X}_n$.
	
	\begin{proposition}[Duality \cite{Kellerer1984}]\label{Prop:Duality}
		\begin{equation}
		\label{eq:multi-marginal-duality}
		W_1 \left( \mathscr{X}_n \right)=\sup_{\left( f_1,\cdots,f_n \right)\in\mathscr{F}}\sum_{i=1}^n\int_Mf_i \left( \xi \right)\,\mathrm{d}m_{x_i} \left( \xi \right)
		\end{equation}
		where the supremum is taken over
		\begin{equation*}
		\mathscr{F}:=\Big\{ \left( f_1,\cdots,f_n \right) \,\Big|\, f_i\in L^1 \left( M,m_{x_i} \right)\,\,\forall 1\leq i\leq \left| e_x \right|,\\
		\sum_{i=1}^nf_i \left( \xi_i \right)\leq c \left( \xi_1,\cdots,\xi_n \right) \Big\}.
		\end{equation*}
	\end{proposition}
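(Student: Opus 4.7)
The plan is to establish the two inequalities separately: weak duality ($W_1(\mathscr{X}_n) \geq S$, where $S$ denotes the right-hand supremum in \eqref{eq:multi-marginal-duality}) by elementary integration, and strong duality ($W_1(\mathscr{X}_n) \leq S$) by invoking the general multi-marginal Kantorovich duality of Kellerer \cite{Kellerer1984} after verifying its hypotheses for the cost $c$ defined in \eqref{Def:Cost}.

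For weak duality, I would fix any admissible $(f_1,\ldots,f_n)\in\mathscr{F}$ and any coupling $\pi\in\Pi(\mathscr{X}_n)$, then integrate the pointwise inequality $\sum_{i=1}^n f_i(\xi_i) \leq c(\xi_1,\ldots,\xi_n)$ against $\pi$. Using the marginal constraints $(\mathrm{Proj}_k)_{\#}\pi = m_{x_k}$, this yields
\begin{equation*}
\sum_{i=1}^n \int_M f_i(\xi)\,\mathrm{d}m_{x_i}(\xi) = \int_{M^n} \sum_{i=1}^n f_i(\xi_i)\,\mathrm{d}\pi \leq \int_{M^n} c(\xi_1,\ldots,\xi_n)\,\mathrm{d}\pi.
\end{equation*}
Taking the infimum over $\pi$ on the right and the supremum over $\mathscr{F}$ on the left delivers $S \leq W_1(\mathscr{X}_n)$.

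For strong duality, the next step is to check the conditions of Kellerer's theorem. First, $c\geq 0$ directly from its definition. Second, $c$ is $1$-Lipschitz jointly in its arguments: taking a near-optimal $z$ in \eqref{Def:Cost} for $(\eta_1,\ldots,\eta_n)$ and applying the triangle inequality gives $|c(\xi_1,\ldots,\xi_n)-c(\eta_1,\ldots,\eta_n)|\leq \sum_i d_M(\xi_i,\eta_i)$, so $c$ is continuous, hence lower semicontinuous on the Polish space $M^n$. Third, the primal value is finite: choosing $z=x_1$ in \eqref{Def:Cost} and combining the triangle inequality with the finite-first-moment hypothesis of Definition~\ref{Definition_RandomWalk} shows $c$ to be integrable against the product coupling $m_{x_1}\otimes\cdots\otimes m_{x_n}\in\Pi(\mathscr{X}_n)$. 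These are exactly the assumptions under which Kellerer's multi-marginal Kantorovich duality applies, and its conclusion is precisely $W_1(\mathscr{X}_n)=S$.

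The hard part is the strong-duality direction, which is not elementary: it rests on a Hahn-Banach separation argument in an infinite-dimensional cone of admissible potentials, followed by a Riesz-Markov identification of the separating linear functional with a coupling in $\Pi(\mathscr{X}_n)$, together with a truncation/approximation step to pass from bounded continuous potentials to $L^1$ potentials. This is exactly the technical content packaged in Kellerer's theorem, so the cleanest route is to defer to it; reproducing the argument from scratch would amount to repeating the classical two-marginal Kantorovich proof with $n$ marginals in place of two, a standard but lengthy exercise whose conclusion is already available.
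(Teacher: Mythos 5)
The paper states this proposition without proof, simply citing Kellerer, so your argument follows essentially the same route while filling in the details the paper leaves implicit. Your weak-duality step (integrating the constraint $\sum_i f_i(\xi_i)\leq c(\xi_1,\cdots,\xi_n)$ against an arbitrary coupling and using the marginal conditions) and your verification that $c$ is nonnegative, jointly $1$-Lipschitz (hence lower semicontinuous on the Polish space $M^n$), and integrable against the product coupling via the finite-first-moment hypothesis are all correct, and deferring the strong direction to Kellerer's theorem is exactly what the paper does.
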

	We postpone the justification of our nomenclature of ``scalar curvature'' in Section~\ref{sec:coarse-scalar-curvature-manifold}, under an appropriate manifold setting. Briefly speaking, at least when the hypergraph arises from a Riemannian manifold in a natural geometric construction, the coarse scalar curvature \eqref{eq:coarse-scalar-curvature} is asymptotically lower bounded by the scalar curvature of the Riemannian manifold.
	
	
	\subsection{Hypergraph Curvature}
	\label{sec:hypergraph-case}
	Let a hypergraph $H=(V, \E)$ with $V=\{1, 2, \dots, N\}$ and a hyperedge $E=\{1, 2, \dots, n\}$ be given. Inspired by coarse scalar curvature, we wish to define the curvature for each hyperedge $E$ of $H$. To this goal, we first need to define random walk over hypergraphs. It is natural to define the (uniform) random walk started at vertex $i\in V$ on $H$ as the following: for each $j\in V$
	\begin{equation}\label{Def_RandomWalk}
	m_i(j)=\sum_{E\in \mathcal{E}:(i, j)\in E}\frac{1}{d_i}\frac{1}{d(E)-1},
	\end{equation}
	and hence we associate $E$ with $n$ probability measures $m_i\in \mathscr{P}(V)$, $1\leq i\leq n$. Replacing $\mathscr{X}_n$ with $\{1, 2, \dots, n\}$ in Definition~\ref{Def:Coarse_Scalar} and defining $c(x_1,\dots, x_n)=\min_{z\in V}\sum_{i=1}^nd(x_i, z)$, we can define a multi-marginal optimal transport problem associated with $E\in \E$ as 
	\begin{equation}\label{Def_Multi_Marginal}
	W(E)\coloneqq\min_{\pi\in \Pi(m_1, m_2, \dots, m_n)}\sum_{\mathbf{x}^n\in V^n} c(\mathbf{x}^n)\pi(\mathbf{x}^n),
	\end{equation}
	where $\mathbf{x}^n=(x_1,x_2, \dots, x_n)$.
	It is worth mentioning that $c(1, 2, \dots, n)=n-1$ as $\{1, 2, \dots, n\}$ is a hyperedge. In general, $c(x_1, \dots, x_n)\geq n-1$ for distinct $x_1, \dots, x_n$. 
	\begin{definition}[Hypergraph Curvature]\label{Def_Hypergraph_Curva}
		The curvature of a hyperedge $E=\{1, 2, \dots, n\}\in \E$ is defined as 
		$$\kappa(E)\coloneqq 1-\frac{W(E)}{n-1}.$$
	\end{definition}
	Note that if $x_i\notin N(i)$, then $\pi(x_1, \dots,x_i, \dots, x_n)=0$ for all $\pi\in\Pi(m_1, \dots, m_n)$. Hence, we have either $c(x_1, \dots, x_n)\leq 3(n-1)$ or $\pi(x_1, \dots, x_n)=0$ for all $\pi\in\Pi(m_1, \dots, m_n)$. This then demonstrates that,  similar to graph curvature, $-2\leq\kappa(E)\leq 1$.
	
	Specializing  Propositions~\ref{Prop:Equivalent_barycenter} and \ref{Prop:Duality} to the hypergraph setting, we now provide two equivalent formulations for hypergraph curvature.
	\begin{description}
		\item[Barycenter:]
		Although the minimization problem in \eqref{Def_Multi_Marginal} is a linear program, its complexity is exponential in $N$. However, it turns out \cite{Entropic_Regularized} that Wasserstein barycenter problem \eqref{Barycenter_Definition} can be solved quite efficiently. The equivalence between barycenter problem and multi-marginal optimal transport problem justifies to define the \textit{Wasserstein barycenter} of hyperedge $E$ as
		\begin{equation}
		\mathsf{bar}(E)\coloneqq \inf_{\nu\in \mathscr{P}(V)}\sum_{i=1}^nW(m_i, \nu).
		\end{equation}
		Following mutatis mutandis Proposition~\ref{Prop:Equivalent_barycenter}, we have  
		\begin{equation}\label{Equivalent_barycenter}
		\kappa(E)=1-\frac{\mathsf{bar}(E)}{n-1}.
		\end{equation}
		The term barycenter makes sense by recalling that for the Euclidean space the barycenter of points $\{\mathbf{x}_i\}_{i=1}^n$ is given by  $\argmin_{\mathbf{x}}\sum_{i=1}^n\|\mathbf{x}_i-\mathbf{x}\|^2$. 
		Consequently, $\mathsf{bar}(E)$ is the barycenter of points $\{m_i\}_{i=1}^n$ in the Wasserstein space (i.e., a metric space with Wasserstein distance).
		
		\item[Duality] Following mutatis mutandis Proposition~\ref{Prop:Duality}, we can write the following dual formula for $W(E)$ 
		\begin{equation}\label{Equivalence_Duality}
		\kappa(E)=1-\frac{1}{n-1}\sup_{(f_1, \dots, f_n)\in \mathcal{K}}\sum_{i=1}^n \mathbb{E}_{m_i}[f_i(X)] ,
		\end{equation}
		where $\mathbb{E}_\nu[\cdot]$ is the expectation operator with respect to measure $\nu$ and $\mathcal{K}$ is the set of all integrable real-valued functions on $V$ such that for any vector $(x_1, \dots, x_n)\in V^n$
		\begin{equation}\label{Constraint_Dual}
		\sum_{k=1}^nf_k(x_k)\leq c(x_1, \dots, x_n).
		\end{equation}
		
	\end{description}
	
	
	After defining hypergraph curvature, a natural question is whether or not this definition reduces to graph Ricci curvature \eqref{Def_Ollivier} if the hypergraph is indeed a simple graph, i.e., the cardinality of each hyperedge is two. We answer this question in affirmative by invoking the barycenter interpretation. If $H$ is in fact a graph and $E$ is a hyperedge (i.e., $n=2$ and thus $E=\{1,2\}$), then $\mathsf{bar}(E)$ equals either $m_1$ or $m_2$, because for any $\nu\in\mathscr{P}(V)$ the triangle inequality implies $W(m_1,\nu) + W(m_2, \nu)\geq W(m_1, m_2)$. Thus, hypergraph curvature coincides with \eqref{Def_Ollivier}.
	

	It is a well-known fact that Ollivier-Ricci curvature of edge $(x,y)$ depends heavily on the number of common neighbors of $x$ and $y$. Specifically, if $N(x)\cap N(y)=\emptyset$, then $\kappa(x,y)\leq 0$ (see e.g., \cite{Jost2014}). We now prove a similar result for hypergraphs. 
	
	\begin{theorem}\label{Thm:NecessaryCondition}
		For any hyperedge $E$ with cardinality $m$, we have
		$$W(E)\geq 1-\min_{\upsilon,\vartheta\in E}\sum_{F\in \E}\frac{|N(\upsilon)\cap N(\vartheta)\cap F|}{(d(F)-1)d_\upsilon}.$$
	\end{theorem}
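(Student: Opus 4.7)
Set $n=|E|$. My strategy is to reduce the multi-marginal transport cost $W(E)$ to a pairwise $L^{1}$-Wasserstein distance between two of the $n$ marginals, and then lower-bound that pairwise distance via an elementary overlap estimate that exploits the explicit form of the random walk in \eqref{Def_RandomWalk}.

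First, I observe that for any distinct $\upsilon,\vartheta\in E$ the cost \eqref{Def:Cost} dominates the pairwise graph distance. Indeed, if $z^{\star}\in V$ attains the infimum in the definition of $c(x_{1},\dots,x_{n})$, the triangle inequality yields
\[ d(x_{\upsilon},x_{\vartheta})\;\leq\; d(x_{\upsilon},z^{\star})+d(z^{\star},x_{\vartheta})\;\leq\;\sum_{i=1}^{n} d(x_{i},z^{\star})\;=\;c(x_{1},\dots,x_{n}). \]
Consequently, denoting by $\pi_{\upsilon\vartheta}$ the bivariate $(\upsilon,\vartheta)$-marginal of any $\pi\in\Pi(m_{1},\dots,m_{n})$, which is a coupling of $m_{\upsilon}$ and $m_{\vartheta}$, I obtain
\[ \sum_{\mathbf{x}^{n}\in V^{n}} c(\mathbf{x}^{n})\,\pi(\mathbf{x}^{n})\;\geq\;\sum_{(x,y)\in V^{2}} d(x,y)\,\pi_{\upsilon\vartheta}(x,y)\;\geq\; W_{1}(m_{\upsilon},m_{\vartheta}). \]
Passing to the infimum over $\pi$ gives $W(E)\geq W_{1}(m_{\upsilon},m_{\vartheta})$ for every such pair.

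Second, since the hypergraph distance is integer-valued with $d(x,y)\geq 1$ whenever $x\neq y$, every coupling $\sigma$ of $m_{\upsilon}$ and $m_{\vartheta}$ satisfies
\[ \sum_{(x,y)} d(x,y)\,\sigma(x,y)\;\geq\;1-\sum_{x\in V}\sigma(x,x)\;\geq\;1-\sum_{x\in V}\min\{m_{\upsilon}(x),m_{\vartheta}(x)\}, \]
the last step following from $\sigma(x,x)\leq\min\{m_{\upsilon}(x),m_{\vartheta}(x)\}$. Because $m_{\vartheta}$ is supported on $N(\vartheta)$ and $m_{\upsilon}$ on $N(\upsilon)$, the $\min$-overlap vanishes outside $N(\upsilon)\cap N(\vartheta)$ and is therefore bounded above by $\sum_{x\in N(\upsilon)\cap N(\vartheta)} m_{\upsilon}(x)$.

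Third, substituting \eqref{Def_RandomWalk} and interchanging the order of summation,
\[ \sum_{x\in N(\upsilon)\cap N(\vartheta)} m_{\upsilon}(x)\;=\;\sum_{F\in\E:\,\upsilon\in F}\frac{|N(\upsilon)\cap N(\vartheta)\cap F|}{d_{\upsilon}\,(d(F)-1)}\;\leq\;\sum_{F\in\E}\frac{|N(\upsilon)\cap N(\vartheta)\cap F|}{d_{\upsilon}\,(d(F)-1)}, \]
the final inequality being the trivial enlargement of the summation range to all hyperedges. Chaining the three displays and minimizing the resulting upper bound on the overlap over $\upsilon,\vartheta\in E$ yields exactly the statement of the theorem. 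The only conceptually nontrivial move is the pairwise domination $c\geq d(x_{\upsilon},x_{\vartheta})$ in the first step; everything afterwards is a chain of standard overlap estimates, so I do not anticipate any genuine obstacle.
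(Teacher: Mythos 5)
Your proposal is correct, and it reaches the paper's bound by a genuinely primal route where the paper works entirely on the dual side. The paper first restricts the dual potentials to the pair $(\upsilon,\vartheta)$ (setting $f_i\equiv 0$ for the other coordinates) to get $W(E)\geq W_1(m_\upsilon,m_\vartheta)$ via the multi-marginal duality of Proposition~\ref{Prop:Duality}, then invokes Kantorovich--Rubinstein and evaluates the explicit $1$-Lipschitz test function equal to $2$ on $N(\upsilon)\setminus N(\vartheta)$ and $1$ elsewhere. You obtain the same two inequalities without any duality: the reduction $W(E)\geq W_1(m_\upsilon,m_\vartheta)$ comes from $c(\mathbf{x}^n)\geq d(x_\upsilon,x_\vartheta)$ together with marginalizing the multi-coupling (the exact primal counterpart of the paper's choice of potentials), and the lower bound $W_1(m_\upsilon,m_\vartheta)\geq 1-\sum_x\min\{m_\upsilon(x),m_\vartheta(x)\}\geq 1-m_\upsilon\left(N(\upsilon)\cap N(\vartheta)\right)$ comes from the off-diagonal mass estimate, which is precisely what the paper's two-valued test function certifies on the dual side. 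Both routes land on the identical intermediate quantity $1-m_\upsilon\left(N(\upsilon)\cap N(\vartheta)\right)$. Your version is more elementary (it needs neither Kellerer's duality nor \cite[Theorem 1.14]{villani2003topics}), and it also tidies a small imprecision in the paper: the exact identity for $m_\upsilon\left(N(\upsilon)\cap N(\vartheta)\right)$ sums only over hyperedges $F\ni\upsilon$, and you correctly enlarge to all $F\in\E$ by an inequality that weakens the lower bound in the permissible direction, whereas the paper writes it as an equality. The paper's dual phrasing, for its part, connects more directly to the duality formulation \eqref{Equivalence_Duality} that it uses elsewhere.
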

	\begin{proof}
		Let again $E$ be denoted by $\{1, 2, \dots, n\}$. The proof relies on the dual formula of $W(E)$. In order to make use of \eqref{Equivalence_Duality}, we need to find a set of functions $\{f_j\}_{j=1}^n$ that satisfy the constraint \eqref{Constraint_Dual}. Fix two vertices $\upsilon$ and $\vartheta$ in $E$. Set $f_i\equiv 0$ for $i\in E\backslash\{\vartheta, \upsilon\}$ and suppose \begin{equation}\label{Proof_Necessary}
		f_\upsilon(r)+f_\vartheta(s)\leq d(r,s),
		\end{equation} for all pairs of vertices $(r,s)$. Then we have for any vector $(x_1, \dots, x_n)\in V^n$ 
		\begin{eqnarray*}
			\sum_{i=1}^nf_i(x_i)&=&f_\upsilon(x_\upsilon)+f_\vartheta(x_\vartheta)\\
			&\leq & d(x_\upsilon, x_\vartheta)= \min_{z\in V}\left[ d(x_\upsilon, z) + d(x_\vartheta, z)\right]\\
			&\leq& c(x_1, \dots, x_n).
		\end{eqnarray*}
		Hence, the constraints $f_i\equiv 0$ for $i\in E\backslash\{\vartheta, \upsilon\}$ and  \eqref{Proof_Necessary} are sufficient to satisfy \eqref{Constraint_Dual}. 
		
		Letting $\C_{\upsilon, \vartheta}$ denote the set of real-valued functions satisfying \eqref{Proof_Necessary}, we can write
		\begin{eqnarray*}
			W(E)&\geq& \sup_{f_\upsilon, f_\vartheta \in \C_{\upsilon, \vartheta}}\sum_{r\in V}f_\upsilon(r)m_\upsilon(r)+ \sum_{r\in V}f_\vartheta(r)m_\vartheta(r)\\
			&=& \sup_{f\in \A}\sum_{r\in V}f(r)\left[m_\upsilon(r)-m_\vartheta(r)\right],
		\end{eqnarray*}
		where the last equality is due to \cite[Theorem 1.14]{villani2003topics} and  $\A$ is the set of all real-valued $1$-Lipschitz functions $f$ on $V$, that is $$\A\coloneqq \{f:V\to \R: |f(r)-f(s)|\leq d(r,s)\}.$$ Now consider the following function
		$$
		f(r) =
		\begin{cases*}
		2, & if $r\in N(\upsilon)\backslash N(\vartheta)$, \\
		1,        & otherwise.
		\end{cases*}$$
		Clearly, $f\in \A$ and therefore, 
		\begin{eqnarray*}
			W(E)&\geq& \sum_{r\in V}f(r)\left[m_\upsilon(r)-m_\vartheta(r)\right]\\
			&=&2\sum_{r\in N(\upsilon)\backslash N(\vartheta)}m_\upsilon(r)+\sum_{r\in N(\upsilon)\cap N(\vartheta)}m_\upsilon(r)-1\\
			&=&1-m_\upsilon(N(\upsilon)\cap N(\vartheta))\\
			&=&1-\sum_{F\in \E}\frac{|N(\upsilon)\cap N(\vartheta)\cap F|}{(d(F)-1)d_\upsilon}.
		\end{eqnarray*}
	\end{proof} 
	In light of this theorem, we have 
	\begin{equation}\label{Bound_Curvature}
	\kappa(E)\leq \min_{\upsilon,\vartheta\in E}\sum_{F\in \E}\frac{|N(\upsilon)\cap N(\vartheta)\cap F|}{(d(F)-1)d_\upsilon}.
	\end{equation} 
	If $H$ happens to be a simple graph, then for every edge $E=(\upsilon,\vartheta)$, we have $m_{\upsilon}(N(\upsilon)\cap N(\vartheta))=\frac{\Delta}{d_\upsilon}$, where $\Delta$ is the number of triangles supported on edge $E$. Hence, the bound \eqref{Bound_Curvature} implies
	\begin{equation}\label{Bound_Graph}
	\kappa(\upsilon, \vartheta)\leq \frac{\Delta}{\max\{d_\upsilon, d_\vartheta\}},
	\end{equation}
	which  appears in \cite[Theorem 4]{Jost2014}.


	\begin{remark}[Curvature as a projection of Boltzmann]\label{Remark_Projection}
		Cuturi \cite{Entropic_Regularized} introduced an entropic regularized version of Wasserstein distance between two measures as 
		\begin{eqnarray}
		W_\eps(\mu, \nu)&\coloneqq& \inf_{\pi\in\Pi(\mu, \nu)}\sum_{(r,s)\in V\times V}\pi(r, s)d(r, s)-\eps H(\pi)\nonumber\\
		&=&\eps\inf_{\pi\in\Pi(\mu, \nu)} D(\pi\|\mathcal{K}_\eps)-\eps\log Z(\eps),\label{Def:Eontropic}
		\end{eqnarray}
		where $\eps>0$ is the regularization parameter, $H(\cdot)$ denotes the Shannon entropy function, $\mathcal{K}_\eps$ is the Boltzmann distribution defined as   $\mathcal{K}_\eps(x,y)\coloneqq \frac{e^{-d(x,y)/\eps}}{Z(\eps)}$ and $Z(\eps)\coloneqq \sum_{(x,y)\in V\times V}e^{-d(x,y)/\eps}$. It follows that $W(\mu, \nu)=\lim_{\eps\downarrow 0} W_\eps(\mu, \nu)$ \cite{Bregman_projection}. As mentioned in \cite{Entropic_Regularized}, despite the theoretically-guaranteed convergence,  the procedure cannot work beyond a graph-dependent value $\eps_0$ beyond which some entries of $\mathcal{K}_\eps$ are represented as zeroes in memory. Since $H(\cdot)$ is a strictly concave function, the optimization problem in \eqref{Def:Eontropic} has a unique solution $\pi^*$ which corresponds to the projection of $\mathcal{K}_\eps$ onto $\Pi(\mu, \nu)$.  
		Following this spirit, we can define the entropic regularized Wasserstein barycenter problem as 
		\begin{eqnarray}
		W_\eps(E)&\coloneqq& \inf_{\nu\in\mathscr{P}(V)}\sum_{i=1}^n W_\eps(m_i, \nu)\nonumber\\
		&=& \eps\inf_{\mathbb{P}\in \C} D\left(\mathbb{P}\|\mathcal{K}^{\otimes n}_\eps\right) -n\eps \log Z(\eps)\nonumber
		\end{eqnarray}
		where $\C\coloneqq \{\pi_1\otimes\pi_2\otimes\dots\otimes\pi_n\in \P^n(V\times V):~\pi_i\mathbf{1}=m_i,~\text{and}~\exists \eta\in\mathscr{P}(V) ~\text{s.t.}~\pi_i^T\mathbf{1}=\eta\}$. Hence, $W_\eps(E)$ is the unique projection of $\mathcal{K}_\eps^{\otimes n}$ onto $\C$. Assuming $\eps$ is sufficiently small, it follows that $W(m_1, \dots, m_n)$ is an approximation of this projection.
		
	\end{remark}
	\section{Complete uniform hypergraphs}
	\label{sec:Uniform_Hyper}
	Graph Ricci curvature turns out to have a simple formula for complete graphs $K_N$. In particular, it is shown \cite{Jost2014} that $\kappa(x,y)=\frac{N-2}{N-1}$ for any edges $(x,y)$ in $K_N$. In this section, we show that complete uniform hypergraphs with $N$ vertices have the same curvature as  $K_N$. 
	\begin{definition}
		A hypergraph $H^n_N=(V, \mathcal{E})$ is called \textit{complete $n$-uniform} for $n\leq |V|=N$ if $\mathcal{E}=[V]_n$, where $[V]_n$ is the collection of all $n$-subsets of $V$.
	\end{definition}
	Notice that $H^2_N=K_N$. Recall that $V=\{1, 2, \dots, N\}$. It follows that in $H^n_N$, we have $d(j)={N-1 \choose n-1}$, $j\in V$ and  $d(E)=n$ for each $E\in \E$. Furthermore, any pair of $(i, j)\in E$ is contained in ${N-2 \choose n-2}$ many hyperedges. Therefore, according to \eqref{Def_RandomWalk}, the random walk started at $j\in V$ of $H^n_N$ is
	\begin{equation}\label{RandomWalk_Complete}
	m_j(r)=\frac{1_{\{r\neq j\}}}{N-1}\qquad \forall r\in V.
	\end{equation}
	\begin{lemma}\label{Lemma_Complete_Unifom}
		For every hyperedge $E$ of $H^n_N$, we have
		$$W(E)=\frac{n-1}{N-1}.$$
		In particular, 
		$$\kappa(E)=\frac{N-2}{N-1}.$$ 
	\end{lemma}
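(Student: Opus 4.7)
The plan is to establish matching upper and lower bounds on $W(E)$. As a first step, the random walks \eqref{RandomWalk_Complete} follow from \eqref{Def_RandomWalk} by substituting $d_j = \binom{N-1}{n-1}$, $d(F) = n$, and the fact that each pair of vertices of $H^n_N$ is contained in exactly $\binom{N-2}{n-2}$ hyperedges. Moreover, since any two distinct vertices share a common hyperedge, the shortest-path metric on $V$ is discrete: $d(r,s) = 1_{\{r \neq s\}}$. Consequently the cost reduces to
\begin{equation*}
c(x_1, \ldots, x_n) = n - \max_{z \in V} |\{i : x_i = z\}|,
\end{equation*}
which vanishes exactly when all $x_i$ coincide and is at least $1$ otherwise.

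For the lower bound, I would exploit the crude inequality $c(x_1, \ldots, x_n) \geq 1 - 1_{\{x_1 = \cdots = x_n\}}$. For any coupling $\pi \in \Pi(m_1, \ldots, m_n)$,
\begin{equation*}
\mathbb{E}_\pi[c(X_1, \ldots, X_n)] \geq 1 - \sum_{r \in V} \mathbb{P}_\pi[X_1 = \cdots = X_n = r] \geq 1 - \sum_{r \in V} \min_{1 \leq i \leq n} m_i(r),
\end{equation*}
where the last step is the Fr\'echet bound $\mathbb{P}_\pi[X_1 = \cdots = X_n = r] \leq \min_i m_i(r)$. Since $\min_i m_i(r) = \frac{1}{N-1}$ for $r \in V \setminus E$ but vanishes for $r \in E$ (as $m_r(r) = 0$), this sum equals $\frac{N-n}{N-1}$, and hence $W(E) \geq \frac{n-1}{N-1}$.

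For the upper bound, I would invoke the barycenter formulation \eqref{Equivalent_barycenter} with the candidate $\nu = m_1$. Under the discrete metric, $W(m_i, m_1)$ coincides with the total variation distance $\frac{1}{2}\sum_r |m_i(r) - m_1(r)|$; since $m_i$ and $m_1$ differ only at the two vertices $1$ and $i$, each with magnitude $\frac{1}{N-1}$, one obtains $W(m_i, m_1) = \frac{1}{N-1}$ for every $i \neq 1$. Summing yields $\mathsf{bar}(E) \leq \sum_{i=1}^n W(m_i, m_1) = \frac{n-1}{N-1}$, so $W(E) \leq \frac{n-1}{N-1}$. Combining the two bounds gives $W(E) = \frac{n-1}{N-1}$, and the curvature identity $\kappa(E) = \frac{N-2}{N-1}$ follows from Definition~\ref{Def_Hypergraph_Curva}. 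The main subtlety is recognizing that these two \emph{a priori} very loose estimates -- replacing $c$ by its indicator lower bound, and testing the barycenter at a single candidate $\nu = m_1$ -- happen to match exactly, thanks to the discrete-metric structure of $H^n_N$ that permits the Fr\'echet bound to be saturated.
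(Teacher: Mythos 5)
Your proof is correct, but it diverges from the paper's at the decisive step. The paper also reduces to the total-variation picture (since $d(r,s)=1_{\{r\neq s\}}$ on $H^n_N$, each $W(m_i,\nu)=\tfrac12\|m_i-\nu\|_1$), but it then asserts that the barycenter infimum $\inf_\nu \tfrac12\sum_i\|m_i-\nu\|_1$ is attained at the arithmetic mean $\bar m=\tfrac1n\sum_i m_i$ and simply evaluates there to get $\tfrac{n-1}{N-1}$. That assertion is given without justification and is not a generic fact: minimizers of sums of $L^1$/total-variation distances are median-type objects rather than means (here the unconstrained coordinatewise median is not even a probability measure), so the paper's computation really only delivers the upper bound $W(E)\leq\tfrac{n-1}{N-1}$. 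Your argument supplies exactly the missing half: applying the Fr\'echet--Hoeffding inequality $\mathbb{P}_\pi[X_1=\cdots=X_n=r]\leq\min_i m_i(r)$ to the crude estimate $c\geq 1-1_{\{x_1=\cdots=x_n\}}$ yields the matching lower bound $\tfrac{n-1}{N-1}$ directly in the primal multi-marginal problem, while your test measure $\nu=m_1$ (equivalently the paper's $\bar m$ --- both evaluate to $\tfrac{n-1}{N-1}$) gives the upper bound via the easy direction of Proposition~\ref{Prop:Equivalent_barycenter}, which is all that step needs. In short, your proof is valid and in fact closes a gap in the paper's own argument; the trade-off is that the paper's route, if the claim about $\bar m$ were justified, would identify an explicit barycenter, whereas yours certifies only the optimal value.
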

	In light of this lemma, the curvature of complete $n$-uniform hypergraphs is independent of $n$. This is, in fact, a result of the normalization in Definition~\ref{Def_Hypergraph_Curva}. 
	\begin{proof} 
		Again let $E=\{1, 2, \dots, n\}$.  Notice that for $H^n_M$, we have $d(r,s)=1$ for any distinct pair of vertices $r, s\in V$. Thus, we can write for a measure $\nu\in \mathscr{P}(V)$  
		\begin{eqnarray*}
			\sum_{i=1}^n W(m_1, \nu) &=& \min_{\pi^{(1)}\in\Pi(m_1, \nu)}\sum_{r=1}^{N}\sum_{s=1}^{N}d(r,s)\pi^{(1)}(r,s)+ \dots   + \min_{\pi^{(n)}\in\Pi(m_n, \nu)}\sum_{r=1}^{N}\sum_{s=1}^{N}d(r,s)\pi^{(n)}(r,s) 
			\\
			&=&\min_{\pi^{(1)}\in\Pi(m_1, \nu)}\left[1-\sum_{r=1}^{N}\pi^{(1)}(r,r)\right] +\dots +   \min_{\pi^{(n)}\in\Pi(m_n, \nu)}\left[1-\sum_{r=1}^{N}\pi^{(n)}(r,r)\right] \\
			&=&\frac{1}{2}\sum_{i=1}^n\|m_i-\nu\|,
		\end{eqnarray*}
		where the last equality follows from \cite[Exercise 1.17]{villani2003topics} and for two vectors $a$ and $b$, $\|a-b\|\coloneqq \sum_{i}|a(i)-b(i)|$. Thus, we obtain 
		$$\mathsf{bar}(E)=\inf_{\nu\in\mathscr{P}(V)}\frac{1}{2}\sum_{i=1}^n\|m_i-\nu\|=\frac{1}{2}\sum_{i=1}^n\|m_i-\bar{m}\|,$$
		where $\bar{m}=\frac{1}{n}\sum_{i=1}^nm_i$.        
		Recalling that $m_i(r)=\frac{1_{\{r\neq i\}}}{N-1}$, we obtain
		$$W(E)=\frac{n-1}{N-1}.$$
	\end{proof}
	\section{Examples}
	\label{Sec:Example}
	In this section, we focus on the computation of hypergraph curvature in two examples to illustrate the differences and similarities with the graph curvature. The first example is a natural generalization of infinite path $P_n$; a simple graph that has $n$ vertices with $n-2$ vertices of degree $2$ and the other two of degree one. 
	
	\begin{example}
		It is shown \cite{Exact_RicciResults} that $P_n$ is one of the few graphs (among graphs with girth at least $5$) with constant zero curvature.  We now demonstrate in the following theorem that a similar statement does not hold for the \textit{hyperpath}; a hypergraph whose vertices have degree at most 2. For the ease of presentation, we assume that any two intersecting hyperedges have exactly one common vertex; see Fig.~\ref{fig:Hypergraph}.
		\begin{figure}
			\centering
			\begin{tikzpicture}[scale=0.7]
			\draw[thick, black!40!red, rotate=0] (0,0) ellipse (38pt and 25pt);
			\draw[thick, black!30!green,rotate=0] (3.8,0) ellipse (38pt and 25pt);
			\draw[thick, black!40!blue,rotate=0] (1.8,0) ellipse (38pt and 25pt);
			\draw[thick, black!30!brown,rotate=0] (5.8,0) ellipse (38pt and 25pt);
			\draw[thick, black!20!gray,rotate=0] (7.8,0) ellipse (38pt and 25pt);
			\draw[thick, black,rotate=40] (4,-1.6) ellipse (30pt and 20pt);
			\draw[thick, red!40!black,rotate=-40] (3.3,4.45) ellipse (30pt and 20pt);
			\fill (-0.4,-0.4) circle (0.06) node [right]{};
			\fill (-0.9,0) circle (0.06) node [right]{};
			\fill (-0.6,0.4) circle (0.06) node [right]{};
			\fill (-0.2,0.1) circle (0.06) node [right]{};
			\fill (0.2, 0.5) circle (0.06) node [right] {};
			\fill (0.4, -0.4) circle (0.06) node [right] {};

			\fill (1.6,-0.4) circle (0.06) node [right]{};
			\fill (2.4,0.4) circle (0.06) node [right]{};
			\fill (1.4,0.4) circle (0.06) node [right]{};
			\fill (1.8,0.1) circle (0.06) node [right]{};
			\fill (2, 0.5) circle (0.06) node [right] {};
			\fill (2.4, -0.4) circle (0.06) node [right] {};

			\fill (3.6,-0.4) circle (0.06) node [right]{};
			\fill (3.4,-0.5) circle (0.06) node [right]{};
			\fill (4,0.1) circle (0.06) node [right]{};
			\fill (3.8,0.1) circle (0.06) node [right]{};
			\fill (3.2, 0.2) circle (0.06) node [right] {};
			\fill (4.4, -0.4) circle (0.06) node [right] {};
			\fill (3.8, 0.6) circle (0.06) node [right] {};
			\fill (4, -0.4) circle (0.06) node [right] {};
			
			\fill (5.6,-0.4) circle (0.06) node [right]{};
			\fill (5.8, 0.2) circle (0.06) node [right]{};
			\fill (5.4, 0.1) circle (0.06) node [right]{};
			\fill (5.7, 0.65) circle (0.06) node [right]{};
			\fill (5.6, -0.1) circle (0.06) node [right]{};
			\fill (6.2, 0.3) circle (0.06) node [right] {};
			\fill (6.4, -0.4) circle (0.06) node [right] {};
			
			\fill (5.4, 1.65) circle (0.06) node [right]{};
			
			\fill (8,-0.4) circle (0.06) node [right]{};
			\fill (7.8, 0.2) circle (0.06) node [right]{};
			\fill (7.4, 0.4) circle (0.06) node [right]{};
			\fill (7.6, -0.1) circle (0.06) node [right]{};
			\fill (8.3, 0.5) circle (0.06) node [right] {};
			\fill (8.8, -0.4) circle (0.06) node [right] {};
			
			\fill (3.8,1.8) circle (0.06) node [right]{};
			\fill (4.3, 2.1) circle (0.06) node [right]{};
			\fill (4.6, 1.6) circle (0.06) node [right]{};
			\fill (4.1, 1) circle (0.06) node [right]{};

			\fill (0.9,0) circle (0.06) node [right]{};
			\fill (2.9,0) circle (0.06) node [right]{};
			\fill (4.9,0) circle (0.06) node [right]{};
			\fill (6.9,0) circle (0.06) node [right]{};

			\end{tikzpicture}
			\caption{A hyperpath with 42 vertices and 7 hyperedges. For the green hyperdge, we have $m=10$ and $\beta= 7$. Note that, unlike path graph, hyperpath might have cycles.}
			\label{fig:Hyperpath}
		\end{figure}
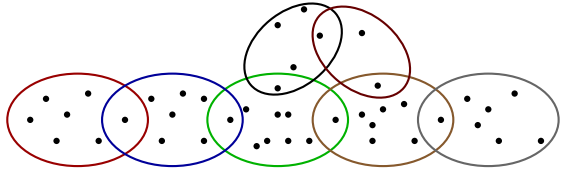
		\begin{theorem}\label{Thm_HyperPath} 
			For any hyperedge $E$ of a hyperpath described above, we have  	$$W(E) \leq \beta \frac{2n-3}{2(n-1)}+(n-\beta-1)\frac{3n-4}{2(n-1)},$$
			where $n\coloneqq d(E)\geq 3$ and $\beta\coloneqq |\{v\in E:~\nexists R\in \mathcal{E}, ~v\in R\}|<n$. In particular, the curvature of hyperedge $E$ is bounded from below by
			\begin{equation*}
			\kappa(E)\geq \frac{-(n-\beta-2)}{2(n-1)}.
			\end{equation*}
		\end{theorem}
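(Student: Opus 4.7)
\medskip
\noindent\emph{Proof plan.}
The plan is to apply the barycenter reformulation of Proposition~\ref{Prop:Equivalent_barycenter}, which rewrites $W(E)=\inf_{\nu}\sum_{i=1}^{n}W(m_i,\nu)$, and then to exhibit a single clever choice of $\nu$ for which all $n$ contributions can be estimated in closed form. Since $\beta<n$, there is at least one shared vertex of $E$; relabel so that $v_n$ is such a vertex and let $F_n\neq E$ denote the other hyperedge containing $v_n$. My choice is $\nu:=m_n$, which is admissible in the barycenter infimum and immediately kills the $n$-th term because $W(m_n,m_n)=0$. The rest of the argument reduces to constructing, for every $i\neq n$, an explicit transport plan whose cost is the desired bound.

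For each $i\neq n$, I would decompose the signed difference $m_n-m_i$ on the relevant regions $\{v_i\}$, $\{v_n\}$, $E\setminus\{v_i,v_n\}$, $F_n\setminus\{v_n\}$ (and, if $v_i$ is shared, also $F_i\setminus\{v_i\}$), and route the excess to the deficit using only the natural hyperpath distances $1$ and $2$. When $v_i$ is a leaf, three moves suffice: (i) push the $\tfrac{1}{n-1}$ excess at $v_n$ into $F_n\setminus\{v_n\}$ at distance one (cost $\tfrac{1}{n-1}$), (ii) send $\tfrac{1}{2(n-1)}$ from each vertex of $E\setminus\{v_i,v_n\}$ to cover the deficit at $v_i$ at distance one (cost $\tfrac{1}{2(n-1)}$), and (iii) route the leftover $\tfrac{n-3}{2(n-1)}$ excess from $E\setminus\{v_i,v_n\}$ through $v_n$ into $F_n\setminus\{v_n\}$ at distance two (cost $\tfrac{n-3}{n-1}$). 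These add to exactly $\tfrac{2n-3}{2(n-1)}$. When $v_i$ is a shared vertex, a two-stage plan first consolidates the $F_i\setminus\{v_i\}$ mass at $v_i$ at cost $\tfrac{1}{2}$, and then performs an analogous redistribution involving transport through the three-hyperedge path $F_i\to E\to F_n$; a parallel book-keeping produces $W(m_i,m_n)\leq\tfrac{3n-4}{2(n-1)}$.

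Summing the $\beta$ leaf contributions, the $n-\beta-1$ remaining shared contributions, and the vanishing $n$-th term yields the stated upper bound on $W(E)$. The curvature bound then follows from Definition~\ref{Def_Hypergraph_Curva} by simplifying
\[
1-\frac{W(E)}{n-1}\geq 1-\frac{3n-4-\beta}{2(n-1)}=-\frac{n-\beta-2}{2(n-1)}.
\]
The main obstacle is the shared-shared computation: the book-keeping of excess and deficit over the five regions above is delicate, and one must verify that the hyperpath distance between $F_i\setminus\{v_i\}$ and $F_n\setminus\{v_n\}$ is at most three (the generic case), while noting that any incidental shared vertex between $F_i$ and $F_n$ could only shorten the route and hence sharpen the bound rather than violate it.
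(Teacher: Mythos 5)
Your proposal follows essentially the same route as the paper: both bound $W(E)$ by $\sum_{i}W(m_i,\nu)$ with $\nu$ chosen as the random-walk measure at a shared vertex of $E$ (the paper uses $m_1$, you use $m_n$), and both reduce the problem to the pairwise estimates $W(m_i,m_n)\le\tfrac{2n-3}{2(n-1)}$ for leaf vertices and $W(m_i,m_n)\le\tfrac{3n-4}{2(n-1)}$ for shared vertices. Your explicit transport plans reproduce exactly the values the paper asserts via its ``tedious linear-programming argument,'' and since only upper bounds on these pairwise distances are needed for the theorem, your argument is correct and complete.
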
	
		\begin{proof}[Proof Sketch]
			The proof relies on the simple observation that for the hyperedge $E=\{1, 2, \dots, n\}$, we have  $W(E)\leq \sum_{i=1}^n W(m_i, \nu)$ for any probability measure $\nu\in \mathscr{P}(V)$. In particular, \begin{equation}\label{Wasser0}
			W(E)\leq \sum_{i=2}^nW(m_i, m_1).
			\end{equation}
			
			Assuming that $\{1,2, \dots, n-\beta\}$ are vertices in $E$ that are shared with other hyperedges and $\{n-\beta+1, \dots, n\}$ are isolated vertices inside $E$, we can employ a tedious (yet standard) linear-programming argument to show that 
			\begin{equation}\label{Wasser1}
			W(m_1, m_j)=\frac{3n-4}{2(n-1)}, \qquad \forall j\in \{2, \dots, n-\beta\},
			\end{equation}
			and 
			\begin{equation}\label{Wasser2}
			W(m_1, m_k)=\frac{2n-3}{2(n-1)}, \qquad \forall k\in \{n-\beta+1, \dots, n\}.
			\end{equation}
			Plugging \eqref{Wasser1} and \eqref{Wasser2} into \eqref{Wasser0}, we obtain the result. 
		\end{proof}
		In light of this result, we have $\kappa(E)>0$ if only one vertex of $E$ is shared, i.e., $\beta>n-2$.  In other words, the leaves of a hyperpath have positive curvature which is  different from graph curvature in that in simple graphs (with girth at least 5) each edge connecting to a leaf has zero curvature, see \cite[Theorem 3.3]{Exact_RicciResults}.  
	\end{example}
	
	\begin{example}\label{Example_Toy}
		As a toy example, consider the hypergraph $H=(V, \E)$ with $V=\{1, 2, \dots, 13\}$ and $\E=\{E_1, E_2, E_3, E_4\}$, where $E_1=\{1, 2, 3\}$, $E_2=\{2, 4,\dots, 7\}$, $E_3=\{6, \dots, 11\}$, and $E_4=\{7, 11, 12, 13\}$, as illustrated in Fig.~\ref{fig:Hypergraph}. Using \eqref{Def_RandomWalk}, we can compute the probability measures associated to each hyperedge. For instance, the random walk started at vertex $2$ is $\mu_2=[0.25, 0, 0.25, 0.125, 0.125, 0.125, 0.125, 0, \dots, 0]$. Since there are only 13 vertices, we can solve the optimization problem \eqref{Def_Multi_Marginal} (as a linear program) for each hyperedge. Solving this optimization problem, we obtain $W(E_1)=1$, $W(E_2)=2.38$, $W(E_3)=2.08$, and $W(E_4)=1.45$. Consequently, it follows that $\kappa(E_1)=0.5$, $\kappa(E_2)=0.4$, $\kappa(E_3)=0.58$, and $\kappa(E_4)=0.52$.
		
		Informally speaking, the hyperedge with the lowest curvature is the bridge connecting two components of hypergraphs. This is similar to the intuition of graph Ricci curvature, as experimentally observed in \cite{Curvature_Internet_Topolo}, that edges with negative curvature are locally shortcuts of two component of graph.   
	\end{example}
	
	\newcommand{\boundellipse}[3]
	{(#1) ellipse (#2 and #3)
	}
	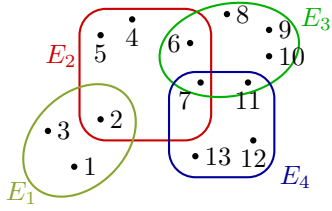
\begin{figure}
		\centering
		\begin{tikzpicture}[scale=0.7]
		\node (v5) at (0.4,2) {};
		\node (v10) at (3.6,1.6) {};
		\node (v4) at (1,2.3) {};
		\node (v2) at (0.4,0.4) {};
		\node (v13) at (2.2,-0.3) {};
		\node (v12) at (3.3,0) {};
		\node (v11) at (3.2,1.1) {};
		\node (v9) at (3.6,2.1) {};
		\node (v1) at (-0.1,-0.5) {};
		\node (v3) at (-0.6,0.18) {};
		\node (v8) at (2.8,2.4) {};
		\node (v7) at (2.3,1.1) {};
		\node (v6) at (2.1,1.85) {};
		\draw [thick, black!20!red, rounded corners=10, draw]
		(0,0) --
		++(2.5,0) --
		++(0,2.5) --
		++(-2.5,0) --
		cycle
		{};
		\draw[thick, brown!70!green, rotate=-45] (0,0) ellipse (25pt and 35pt);
		\draw[thick, black!30!green,rotate=10] (3.1,1.2) ellipse (38pt and 25pt);
		\draw [thick, black!40!blue, rounded corners=10, draw]
		(1.7,-0.7) --
		++(2,0) --
		++(0,2) --
		++(-2,0) --
		cycle
		{};
		\fill (v13) circle (0.06) node [right]{$13$};
		\fill (v12) circle (0.06) node [below]{$12$};
		\fill (v11) circle (0.06) node [below]{$11$};
		\fill (v8) circle (0.06) node [right] {$8$};
		\fill (v3) circle (0.06) node [right] {$3$};
		\fill (v2) circle (0.06) node [right] {$2$};
		\fill (v4) circle (0.06) node [below] {$4$};
		\fill (v5) circle (0.06) node [below] {$5$};
		\fill (v6) circle (0.06) node [left] {$6$};
		\fill (v7) circle (0.06) node [below left] {$7$};
		\fill (v9) circle (0.06) node [right] {$9$};
		\fill (v1) circle (0.06) node [right] {$1$};
		\fill (v10) circle (0.06) node [right] {$10$};
		\node (E1) at (-1.1,-1) {\textcolor{brown!70!green}{$E_1$}};
		\node (E2) at (-0.35,1.6) {\textcolor{black!20!red}{$E_2$}};
		\node (E3) at (4.5,2.3) {\textcolor{black!30!green}{$E_3$}};
		\node (E4) at (4.1,-0.7) {\textcolor{black!40!blue}{$E_4$}};
		\end{tikzpicture}
		\caption{The hypergraph studied in Example~\ref{Example_Toy}.}
		\label{fig:Hypergraph}
	\end{figure}	

	\section{Coarse Scalar Curvature in a Riemannian Manifold Setting}
	\label{sec:coarse-scalar-curvature-manifold}
	
	In this section, we establish a consistency result for the definition of coarse scalar curvature \eqref{eq:coarse-scalar-curvature} under a Riemannian manifold setting. We will assume the hypergraphs of interest stem from a natural geometric model encoding adjacency relations among discrete points randomly sampled from a Riemannian manifold.
	
	\begin{definition}[$\eps$-Neighborhood Hypergraphs]
		Let $\left( M,d_M \right)$ be a complete metric space. For any $\eps>0$, an \emph{$\eps$-neighborhood hypergraph $H=\left( V,\E \right)$ supported on $M$} consists of the following data:
		\begin{itemize}
			\item A finite set of vertices $V=\left\{ v_i\in M,\,1\leq i\leq N \right\}$;
			\item A set of hyperedges $\E$, where each $E\in\E$ is a subset of $V$ satisfying
			\begin{equation*}
			\mathrm{diam}\left( E \right):=\sup_{u,v\in E} d_M \left( u,v \right) < 2\eps;
			\end{equation*}
			\item Every subset $U$ of vertices with $\mathrm{diam}\left( U \right)<2\eps$ corresponds to a hyperedge $E_U\in \E$.
		\end{itemize}
		We will denote $\left| E \right|$ for the number of vertices joined by hyperedge $E$; $\left| V \right|$ and $\left| \E \right|$ will be used to denote the number of vertices and hyperedges in $H$, respectively.
	\end{definition}
	The definition of $\eps$-neighborhood hypergraphs assumes that the spatial proximity of the vertices is faithfully reflected in the connectivity: the diameter of every hyperedge --- understood as a discrete subset $E\subset V$ --- is bounded from above by $2\eps$. 
	
	In the rest of this section, we will assume $M$ is an orientable $d$-dimensional complete Riemannian manifold with finite volume, and the distance function $d_M:M\times M\rightarrow \left[ 0,\infty \right)$ is induced from the Riemannian length structure. As often encountered in the manifold learning setup (see e.g. \cite{LapEigMaps2003} and \cite{CoifmanLafon2006}), we assume the vertices of $H$ are sampled i.i.d. uniformly on $M$ with respect to the standard volume measure $\mathrm{dvol}/\mathrm{vol}\left( M \right)$. Our goal in this section is to establish conditions under which the coarse scalar curvature \eqref{eq:coarse-scalar-curvature} converges to the scalar curvature on $M$ in the limit $\left| V \right|\rightarrow\infty$ and $\eps\rightarrow 0$. We begin our discussion by summarizing some geometric and statistical properties of the Riemannian medians that we will need in the following two lemmas.
	\begin{lemma}
		\label{lem:riemannian-median}
		Let $M$ be an orientable $d$-dimensional complete Riemannian manifold with
		\begin{itemize}
			\item Finite volume $\mathrm{vol}\left( M \right)<\infty$;
			\item Finite, positive injectivity radius $0<\mathrm{Inj}\left( M \right)<\infty$;
			\item Uniformly bounded sectional curvature $K_x \left( u,v \right)<D$ for $D\in\mathbb{R}$ and all $x\in M$, $u,v\in T_xM$.
		\end{itemize}
		Assume $U=\left\{ x_1,\cdots,x_n \right\}$ is a finite discrete subset of $M$ contained in a geodesic ball $B \left( x,\eps \right)$ of sufficiently small radius $\eps>0$ satisfying
		\begin{equation}
		\label{eq:epsilon-assumption}
		2\eps <
		\begin{cases}
		\displaystyle \min \left\{ D^{-1/2}\pi/4, \mathrm{Inj}\left( M \right)/2 \right\} & \textrm{if $D>0$}\\
		\displaystyle \mathrm{Inj}\left( M \right)/2 & \textrm{otherwise}
		\end{cases}
		\end{equation}
		and $U$ is not totally contained in any geodesic on $M$. Then there exists a unique $\bar{x}\in M$ minimizing the moment function
		\begin{equation*}
		M\ni z\mapsto \sum_{i=1}^nd_M \left( z, x_i\right)\in \left[ 0,\infty \right),
		\end{equation*}
		and $\bar{x}$ lies in the smallest closed, geodesically convex subset of $B \left( x,\eps \right)$ containing $U$.
	\end{lemma}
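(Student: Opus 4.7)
The plan is to exploit the classical Riemannian centre-of-mass machinery: under the stated bounds on $\eps$, the ball $B(x,2\eps)$ sits strictly inside the convexity radius of $M$, the moment function $f(z):=\sum_{i=1}^n d_M(z,x_i)$ is convex on it, and strict convexity combined with a nearest-point-projection argument delivers both the location in the convex hull and the uniqueness.

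First I would verify the geometric setup. The bound \eqref{eq:epsilon-assumption} is calibrated so that $2\eps$ is below the convexity radius of $M$ (in the constant-curvature comparison space this radius is $D^{-1/2}\pi/2$ when $D>0$, and $\mathrm{Inj}(M)/2$ in any case). Standard Rauch/Toponogov comparison (as in Chavel or Sakai) then yields that $B(x,2\eps)$ is strongly geodesically convex: any two of its points are joined by a unique minimizing geodesic contained in the ball, and for every fixed $p\in B(x,2\eps)$ the function $z\mapsto d_M(z,p)$ is convex along geodesics of $B(x,2\eps)$, and strictly convex along geodesics not passing through $p$. Consequently $f$ is convex on $B(x,2\eps)$, and the smallest closed geodesically convex subset $C\subset B(x,\eps)$ containing $U$ is well-defined; by Hopf--Rinow, $C$ is compact.

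Next, existence of a global minimizer on $M$. For any $x_i\in U$ one has $f(x_i)\le 2\eps(n-1)$, whereas $f(z)\ge n\,d_M(z,x)-n\eps$, which tends to $\infty$ as $d_M(z,x)\to\infty$. Hence $f$ is coercive and attains its infimum on a compact sublevel set of $M$.

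The heart of the argument, and the main obstacle, is the localization step: showing that every global minimizer lies in $C$. Since $C$ is a closed geodesically convex subset of the strongly convex ball $B(x,2\eps)$, the Riemannian nearest-point projection $\Pi_C$ is well-defined on $B(x,2\eps)$ and is nonexpansive with respect to each fixed $p\in C$, i.e.\ $d_M(\Pi_C(z),p)\le d_M(z,p)$ with strict inequality unless $z\in C$ (this nonexpansiveness on positively curved spaces requires the factor $D^{-1/2}\pi/4$ in \eqref{eq:epsilon-assumption}, via Toponogov comparison; see Karcher's and Afsari's treatments of Riemannian barycenters). Applying this inequality at each $x_i\in U\subset C$ yields $f(\Pi_C(\bar{x}))\le f(\bar x)$, strict unless $\bar x\in C$; therefore every minimizer of $f$ on $M$ lies in $C$. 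An elementary alternative avoids $\Pi_C$ entirely by a first-variation computation: if $\bar x\notin C$, the initial tangent of the geodesic from $\bar x$ to $\Pi_C(\bar x)$ makes an obtuse angle with each minimizing geodesic from $\bar x$ to $x_i$, so the directional derivative of $f$ in that direction is strictly negative, contradicting optimality.

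Finally, uniqueness follows from strict convexity. Suppose $\bar x_1,\bar x_2\in C$ are two minimizers and let $\gamma\subset C$ be the unique minimizing geodesic joining them. By hypothesis $U$ is not contained in any geodesic of $M$, so at least one $x_j\in U$ does not lie on the geodesic carrying $\gamma$; then $t\mapsto d_M(\gamma(t),x_j)$ is strictly convex, making $f\circ\gamma$ strictly convex. This contradicts $f(\bar x_1)=f(\bar x_2)=\inf f$, and uniqueness is proved.
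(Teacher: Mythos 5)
Your argument is mathematically sound, but it is a genuinely different route from the paper's: the paper proves this lemma purely by citation, attributing uniqueness to Theorem 3.1 of Yang (2010) (which is exactly where the bound \eqref{eq:epsilon-assumption} and the non-collinearity hypothesis come from), existence to Fletcher et al.\ and Afsari et al., and the localization of $\bar{x}$ in the convex hull to Proposition 2.4 of Yang (2010). You instead reconstruct the machinery behind those citations: coercivity of the moment function for existence, strict convexity of $z\mapsto d_M(z,x_j)$ along geodesics avoiding $x_j$ (valid inside the $\pi/(2\sqrt{D})$ comparison range, which is why \eqref{eq:epsilon-assumption} carries the factor $D^{-1/2}\pi/4$) for uniqueness, and a nearest-point-projection or first-variation argument for localization. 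What the paper's approach buys is brevity and rigor-by-reference; what yours buys is an explicit account of where each hypothesis is used, which the paper leaves implicit. Two loose ends in your version are worth tightening, though neither is fatal: first, your coercivity bound only confines a priori minimizers to $B(x,3\eps)$, not to $B(x,2\eps)$, so you should check that unique minimizing geodesics to the points of $C$ (and hence the first-variation computation) are still available there --- they are, since $4\eps<\mathrm{Inj}(M)$; second, the nonexpansiveness of $\Pi_C$ under a merely \emph{upper} sectional curvature bound that may be positive is itself a nontrivial Toponogov-type statement, and as written you assert it rather than prove it --- this is precisely the content of the result the paper cites as Proposition 2.4 of Yang (2010), so your "elementary alternative" via the obtuse-angle/first-variation argument is the safer path to make self-contained.
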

	\begin{proof}
		The bound on $\eps$ and the assumption that $U$ is not totally contained in any geodesic on $M$ together ensures the uniqueness of $\bar{x}$, according to \cite[Theorem 3.1]{Yang2010}; the existence can be found e.g. in \cite{FVJ2009geometric} or \cite{ABY2013}. The last assertion about the location of $\bar{x}$ is a consequence of \cite[Proposition 2.4]{Yang2010}.
	\end{proof}
	The minimizer $\bar{x}$ is known as the \emph{Riemannian median}. Note that the diameter bound $2\eps$ only depends on the injectivity radius when the Riemannian manifold has negative sectional curvatures; this is of particular interest to us since many real world networks demonstrate ``negatively curved'' behaviors, see e.g., \cite{Curvature_Saniee,Curvature_MObasheri,Curvature_Internet_Topolo}.
	
	\begin{lemma}
		\label{lem:riemannian-median-consistency}
		Assume $M$, $\eps>0$ satisfy the same assumptions as in Lemma~\ref{lem:riemannian-median}. Let $\left\{ X_n \right\}_{n=1}^{\infty}$ be a sequence of i.i.d. random points uniformly distributed in a geodesic ball $B \left( x,\eps \right)$ of radius $\eps>0$ centering at $x\in M$, and denote $\mu_n$ for the Riemannian median of $X_1,\cdots,X_n$ for all $n\in\mathbb{N}$. Then $\mu_n\rightarrow \bar{x}$ a.s. as $n\rightarrow\infty$, where $\bar{x}$ is the unique Riemannian median of the uniform distribution on $B \left( x,\eps \right)$.
	\end{lemma}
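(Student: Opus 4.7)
The plan is to realize $\bar{x}$ and $\mu_n$ as the unique minimizers of a population and an empirical moment function respectively, and then invoke a standard M-estimator consistency argument: pointwise strong law of large numbers, promoted to uniform convergence on a compact set, combined with the uniqueness supplied by Lemma~\ref{lem:riemannian-median}, yields convergence of the argmins.

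Concretely, set $F(z):=\mathbb{E}\left[d_M(z,X_1)\right]$ and $F_n(z):=n^{-1}\sum_{i=1}^n d_M(z,X_i)$, so that $\bar{x}=\argmin_{z\in M}F(z)$ and $\mu_n=\argmin_{z\in M}F_n(z)$. First I would localize. By Lemma~\ref{lem:riemannian-median}, each $\mu_n$ lies in the smallest closed geodesically convex subset of $B(x,\eps)$ containing $\{X_1,\dots,X_n\}$, hence in the compact set $K:=\overline{B(x,\eps)}$; the same geometric reasoning, applied in the continuous setting (or a limiting argument from finite samples dense in $B(x,\eps)$), places $\bar{x}\in K$. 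The ``not contained in any single geodesic'' hypothesis of Lemma~\ref{lem:riemannian-median} holds almost surely for all sufficiently large $n$, since any finite union of geodesics has volume zero and the $X_i$ are uniformly distributed on $B(x,\eps)$.

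The second step, and the main technical step, is uniform convergence $F_n\to F$ on $K$. Since $z\mapsto d_M(z,y)$ is $1$-Lipschitz by the triangle inequality, every $F_n$ and $F$ is $1$-Lipschitz on $K$, uniformly in $n$. The strong law of large numbers gives $F_n(z)\to F(z)$ a.s. for each fixed $z\in K$. Fixing a countable sequence $\eta_k\downarrow 0$ and corresponding finite $\eta_k$-nets of $K$, one obtains a probability-one event on which $F_n\to F$ at every net point simultaneously; the uniform Lipschitz bound then transfers convergence to all of $K$, yielding $\sup_{z\in K}|F_n(z)-F(z)|\to 0$ a.s. The passage from pointwise to uniform convergence is the pivot of the argument, but it is made routine by the Lipschitz control.

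Finally, on this probability-one event, compactness of $K$ lets any subsequence of $\mu_n$ admit a sub-subsequence $\mu_{n_k}\to\mu\in K$. Uniform convergence and continuity of $F$ give $F_{n_k}(\mu_{n_k})\to F(\mu)$, while by optimality $F_{n_k}(\mu_{n_k})\leq F_{n_k}(\bar{x})\to F(\bar{x})$; hence $F(\mu)\leq F(\bar{x})$, and the uniqueness from Lemma~\ref{lem:riemannian-median} forces $\mu=\bar{x}$. Every subsequence of $\mu_n$ thus has a sub-subsequence converging to $\bar{x}$, so $\mu_n\to\bar{x}$ almost surely.
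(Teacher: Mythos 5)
Your proof is correct, but it takes a genuinely different route from the paper. The paper's own proof is essentially a citation: existence and uniqueness of $\mu_n$ and $\bar{x}$ are delegated to Lemma~\ref{lem:riemannian-median} (i.e.\ to \cite{Yang2010}), and the almost sure convergence $\mu_n\rightarrow\bar{x}$ is quoted directly from \cite[Corollary 4.1]{ABY2013}, which establishes consistency of empirical Riemannian $p$-means in exactly this setting. You instead reprove the convergence from scratch by a Wald-type argmin-consistency argument: localization of all minimizers to the compact set $K=\overline{B(x,\eps)}$ via the convex-hull property, pointwise strong law for the empirical moment functions $F_n$, promotion to uniform convergence on $K$ using the uniform $1$-Lipschitz bound and a net argument, and finally a subsequence-plus-uniqueness step to identify every subsequential limit with $\bar{x}$. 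What the paper's approach buys is brevity and reliance on a result proved in greater generality; what yours buys is a self-contained proof that makes explicit exactly which ingredients are needed (Lipschitz continuity of $z\mapsto d_M(z,y)$, compactness of $K$, and uniqueness of the population minimizer), at the cost of a slightly informal treatment of two side points: the almost sure verification that $\{X_1,\dots,X_n\}$ is not contained in a single geodesic (which needs $n\geq 3$ and $d\geq 2$, and is cleanest argued by conditioning on $X_1,X_2$ and noting that the geodesic they span has measure zero), and the placement of $\bar{x}$ in $K$, which is most directly justified by the characterization of medians of general probability measures in \cite{Yang2010} rather than by a limiting argument from finite samples. Neither issue is a gap in substance.
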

	\begin{proof}
		The existence and uniqueness of $\mu_n$ and $\bar{x}$ follows from Lemma~\ref{lem:riemannian-median}; note here that $\bar{x}$ need not coincide with $x$ in general, by the characterization of Riemannian medians established in \cite[Proposition 2.1 and Theorem 2.2]{Yang2010}. The almost sure convergence has been established in \cite[Corollary 4.1]{ABY2013}.
	\end{proof}
	
	We now turn to investigating the coarse scalar curvature of hypergraphs generated from a geometric probabilistic model: saturated $\eps$-neighborhood hypergraphs supported on a Riemannian manifold $M$, with vertices uniformly distributed on $M$. We consider the same random walk on the Riemannian manifold $M$ as in \cite{Ollivier1}, namely, the one given in \eqref{eq:random-walk-riemannian-manifold}: for any $x\in M$ and $\eps>0$,
	\begin{equation}
	\label{eq:random-walk-riemannian-manifold-epsilon}
	\text{d}m_x^\eps\coloneqq \frac{1_{B\left(x, \eps\right)}}{\text{vol}(B(x, \eps))}\text{d}\text{vol},
	\end{equation}
	which is essentially the standard volume measure on $M$ restricted and renormalized on $B \left( x,\eps \right)$. For simplicity of statement, let us denote $\hat{x}_E$ for the Riemannian median --- when it exists and is unique --- of the vertices connected by a hyperedge $E$ in a saturated hypergraph $G=\left( V,\E \right)$.
	
	\begin{theorem}
		\label{thm:coarse-scalar-curvature-riem-manifold}
		Let $M$, $\eps>0$ be as assumed in Lemma~\ref{lem:riemannian-median}. Let $\left\{ v_i \right\}_{i=1}^{\infty}$ be a sequence of i.i.d. random points sampled uniformly on $M$ with respect to the standard Riemannian volume measure. For any $N\in\mathbb{N}$, let $V_N:=\left\{ v_i \right\}_{i=1}^N$ and $H_N=\left( V_N,\E_N \right)$ be an $\eps$-neighborhood hypergraph supported on $M$. If there exists a hyperedge $E_N\in\E_N$ for each $H_n$ such that
		\begin{itemize}
			\item $E_N\subsetneqq E_{N+1}$ for all sufficiently large $N\in\mathbb{N}$
			\item there exists $\bar{x}\in M$ such that $\hat{x}_{E_N}\rightarrow\bar{x}$ as $n\rightarrow\infty$
			\item $E_N\subset B \left( \bar{x},\eps \right)$ for all sufficiently large $N\in\mathbb{N}$
		\end{itemize}
		then the coarse scalar curvature of hyperedge $E_N$ satisfies, for all sufficiently small $\eps>0$, 
		\begin{equation}
		\label{eq:coarse-to-smooth-scalar}
		\limsup_{N\rightarrow\infty}\kappa \left( E_N \right)\geq \frac{\eps^2}{2 \left( d+2 \right)}\mathrm{Scal}\left( \bar{x} \right)+O \left( \eps^3 \right).
		\end{equation}
	\end{theorem}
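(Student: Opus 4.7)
The plan is to combine three ingredients: the barycenter reformulation of the multi-marginal transport (Proposition~\ref{Prop:Equivalent_barycenter}), Ollivier's pointwise expansion \eqref{Manifold_Ricci} for pairwise Wasserstein distances, and the spherical-averaging identity \eqref{eq:equiv-def-scalar} to convert averaged Ricci contributions into $\mathrm{Scal}(\bar{x})$. The Riemannian median $\bar{x}$ plays a double role: it is simultaneously the natural base point for the barycenter test measure and the origin of the normal coordinates in which the spherical averaging is carried out.

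By Proposition~\ref{Prop:Equivalent_barycenter}, inserting the feasible test measure $\nu = m_{\bar{x}}^{\eps}$ into the barycenter infimum gives $W_1(E_N) \leq \sum_{v \in E_N} W_1(m_v^{\eps}, m_{\bar{x}}^{\eps})$, while the denominator equals $c(E_N) = \sum_{v \in E_N} d_M(v, \hat{x}_{E_N})$ by the defining property of the Riemannian median. For each vertex $v \in E_N \subset B(\bar{x}, \eps)$, write $v = \exp_{\bar{x}}(\delta_v u_v)$ with $\delta_v = d_M(v, \bar{x}) < \eps$ and $u_v \in T_{\bar{x}} M$ a unit tangent vector. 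Applying Ollivier's expansion \eqref{Manifold_Ricci} and bounding the mixed error $\eps^2 \delta_v$ by $\eps^3$ (since $\delta_v < \eps$),
\begin{equation*}
W_1(m_v^{\eps}, m_{\bar{x}}^{\eps}) \;=\; \delta_v \Bigl[\,1 - \tfrac{\eps^2 \mathrm{Ric}_{\bar{x}}(u_v, u_v)}{2(d+2)}\,\Bigr] + O(\eps^3 \delta_v).
\end{equation*}
Summing and dividing by $\sum_v \delta_v$, and using Lemma~\ref{lem:riemannian-median-consistency} to replace $\sum_v d_M(v, \hat{x}_{E_N})$ by $\sum_v \delta_v$ up to an $o_N(1)$ relative error, one obtains
\begin{equation*}
1 - \kappa(E_N) \;\leq\; 1 - \frac{\eps^2}{2(d+2)}\,\frac{\sum_v \delta_v\, \mathrm{Ric}_{\bar{x}}(u_v, u_v)}{\sum_v \delta_v} + O(\eps^3) + o_N(1).
\end{equation*}

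The growing-chain hypothesis $E_N \subsetneqq E_{N+1}$ together with the i.i.d. uniform sampling of $\{v_i\}$ on $M$ forces the empirical measure on $E_N$ to converge almost surely to the uniform distribution on $B(\bar{x}, \eps)$. In geodesic normal coordinates at $\bar{x}$, the Riemannian volume element in polar form is $(1 + O(r^2))\, r^{d-1}\,dr\,\mathrm{d}S_{d-1}(u)$, so the radial and spherical contributions decouple up to $O(\eps^2)$. The strong law of large numbers combined with \eqref{eq:equiv-def-scalar} then yields
\begin{equation*}
\frac{\sum_v \delta_v\, \mathrm{Ric}_{\bar{x}}(u_v, u_v)}{\sum_v \delta_v} \;\xrightarrow[N \to \infty]{\text{a.s.}}\; \frac{1}{\omega_{d-1}} \int_{S_{d-1}(0)} \mathrm{Ric}_{\bar{x}}(u, u)\, \mathrm{d}S_{d-1}(u) + O(\eps^2) \;=\; \mathrm{Scal}(\bar{x}) + O(\eps^2).
\end{equation*}
Taking $\limsup_{N \to \infty}$ eliminates the $o_N(1)$ term, while the $O(\eps^2)$ spherical-averaging correction becomes $O(\eps^4)$ after multiplication by $\eps^2/(2(d+2))$ and is absorbed into the $O(\eps^3)$ remainder, establishing the claimed bound.

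The principal technical obstacle is the uniform control of the aggregate error from Ollivier's expansion: the pointwise $O(\eps^3 \delta_v)$ must, once summed over $v \in E_N$ and divided by $\sum_v \delta_v \asymp |E_N|\cdot \eps$, remain of order $\eps^3$, which depends crucially on the uniform bound $\delta_v < \eps$ inherited from $E_N \subset B(\bar{x},\eps)$. A subsidiary concern is the rigorous extraction of the polar decomposition for the sampled empirical measure in normal coordinates, and in particular verifying that the $O(r^2)$ curvature correction to the volume element contributes only to the $O(\eps^3)$ remainder and does not disturb the leading Ricci term.
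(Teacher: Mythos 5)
Your proposal is correct and follows essentially the same route as the paper's own proof: the barycenter upper bound with test measure $m_{\bar{x}}^{\eps}$, Ollivier's pairwise expansion, the law of large numbers to pass from sums over $E_N$ to integrals over $B(\bar{x},\eps)$, and a polar-coordinate computation identifying the angular average of $\mathrm{Ric}_{\bar{x}}$ with $\mathrm{Scal}(\bar{x})$. The only cosmetic difference is that the paper evaluates the two moments $\int_{B(\bar{x},\eps)} d_M(\bar{x},y)\,\mathrm{dvol}$ and $\int_{B(\bar{x},\eps)} \mathrm{Ric}(v_{\bar{x},y},v_{\bar{x},y})\,d_M(\bar{x},y)\,\mathrm{dvol}$ via explicit asymptotic expansions and takes their ratio, whereas you argue the same conclusion by decoupling the radial and spherical integrals; both versions share the same (unaddressed in the paper as well) reliance on the empirical measure of $E_N$ converging to the uniform measure on $B(\bar{x},\eps)$.
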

	\begin{proof}
		Since curvature is a local quantity, we may assume without loss of generality that $M$ is connected and even compact. Let $\eps>0$ be sufficiently small such that the geodesic ball $B \left( \bar{x},\eps \right)$ is geodesically convex neighborhood of $\bar{x}$ (c.f. \cite[Proposition 4.2]{doCarmo1992RG}). To ease notations, write $\ell_N:=\left| E_N \right|$ and denote $x_1,\cdots,x_{\ell_N}\in V_N$ for the vertices of $H_N$ connected by the hyperedge $E_N\in\E$. By Proposition~\ref{Prop:Equivalent_barycenter} and the definition of Riemannian median, we have
		\begin{equation}\label{eq:upperbound-limit}
		\frac{W_1 \left( E_N \right)}{c \left( x_1,\cdots,x_{\ell_n} \right)}\leq \frac{\sum_{i=1}^{\ell_N}W_1 \left( m_{x_i},m_{\bar{x}} \right)}{\sum_{i=1}^{\ell_N}d_M \left( x_i,\hat{x}_{E_N} \right)}
		\xrightarrow{\ell_N\rightarrow\infty}\frac{\displaystyle \int_{B \left( \bar{x},\eps \right)}W_1 \left( m_y,m_{\bar{x}} \right)\frac{\mathrm{dvol} \left( y \right)}{\mathrm{vol}\left( B \left( \bar{x},\eps \right) \right)}}{\displaystyle \int_{B \left( \bar{x},\eps \right)}d_M \left( y,\bar{x} \right)\frac{\mathrm{dvol} \left( y \right)}{\mathrm{vol}\left( B \left( \bar{x},\eps \right) \right)}}
		\end{equation}
		where the limit follows from the law of large number and the Lipschitz continuity of the function (see e.g. \cite[\textsection 2]{Yang2010})
		\begin{equation*}
		x\mapsto \int_{B \left( x,\eps \right)}d_M \left( x,y \right)\,\mathrm{dvol}\left( y \right).
		\end{equation*}
		We know from \cite[Example 7]{Ollivier1} that, for any $y\in B \left( \bar{x},\eps \right)$,
		\begin{equation*}
		W_1 \left( m_y, m_{\bar{x}} \right)\\
		=\left( 1-\frac{\eps^2}{2 \left( d+2 \right)}\mathrm{Ric}\left( v_{\bar{x},y}, v_{\bar{x},y}\right)+O \left( \eps^3 \right) \right)d_M \left( \bar{x},y \right)
		\end{equation*}
		where $v_{\bar{x},y}$ is a unit tangent vector in $T_{\bar{x}}M$ such that\
		\begin{equation*}
		\exp_{\bar{x}}\left( d_M \left( \bar{x},y \right)v_{\bar{x},y} \right)=y.
		\end{equation*}
		It follows that
		\begin{equation*}
		\begin{aligned}
		\int_{B \left( \bar{x},\eps \right)}W_1 \left( m_y,m_{\bar{x}} \right)\,\mathrm{dvol} \left( y \right)&=\int_{B \left( \bar{x},\eps \right)}d_M \left( \bar{x},y \right)\mathrm{dvol}\left( y \right)-\frac{\eps^2}{2 \left( d+2 \right)}\int_{B \left( \bar{x},\eps \right)}\mathrm{Ric}\left( v_{\bar{x},y}, v_{\bar{x},y}\right)d_M \left( \bar{x},y \right)\mathrm{dvol}\left( y \right)\\
		&~~~+O \left( \eps^3\int_{B \left( \bar{x},\eps \right)}d_M \left( \bar{x},y \right)\mathrm{dvol}\left( y \right) \right)
		\end{aligned}
		\end{equation*}
		which, when plugged back into \eqref{eq:upperbound-limit}, gives
		\begin{equation}
		\limsup_{N\rightarrow\infty}\frac{W_1 \left( E_N \right)}{c \left( x_1,\cdots,x_{\ell_N} \right)}\leq1 +O \left( \eps^3 \right)
		-\frac{\eps^2}{2 \left( d+2 \right)}\frac{\displaystyle \int_{B \left( \bar{x},\eps \right)}\mathrm{Ric}\left( v_{\bar{x},y}, v_{\bar{x},y}\right)d_M \left( \bar{x},y \right)\frac{\mathrm{dvol}\left( y \right)}{\mathrm{vol}\left( B \left( \bar{x},\eps \right) \right)}}{\displaystyle \int_{B \left( \bar{x},\eps \right)}d_M \left( \bar{x},y \right)\frac{\mathrm{dvol}\left( y \right)}{\mathrm{vol}\left( B \left( \bar{x},\eps \right) \right)}}.\label{eq:limsup-intermediate}
		\end{equation}
		A straightforward calculation using geodesic normal coordinates reveals
		\begin{equation}
		\label{eq:asymp-expan-moment}
		\int_{B \left( x,\eps \right)}d_M \left( x,y \right)\,\frac{\mathrm{dvol}_M \left( y \right)}{\mathrm{vol}\left( B \left( x,\eps \right) \right)}
		=\eps\left[\frac{d}{d+1}-\frac{\eps^2}{3} \frac{\mathrm{Scal}\left( x \right)}{\left( d+1 \right)\left( d+2 \right)\left( d+3 \right)}+O \left( \eps^3 \right)\right]
		\end{equation}
		and
		\begin{equation}
		\int_{B \left( x,\eps \right)}\mathrm{Ric}\left( v_{x,y}, v_{x,y}\right)d_M \left( x,y \right)\,\frac{\mathrm{dvol}_M \left( y \right)}{\mathrm{vol}\left( B \left( x,\eps \right) \right)}
		=\eps \Bigg[ \frac{\mathrm{Scal}\left( x \right)}{d+1}+\frac{\eps^2}{3\left( d+2 \right)\left( d+3 \right)}\left( \frac{\left| \mathrm{Scal}\left( x \right) \right|^2}{d+1}-\left\| R \left( x \right) \right\|^2 \right)
		+O \left( \eps^3 \right) \Bigg] \label{eq:asymp-expan-ricci-moment}
		\end{equation}
		where $\left\| R \left( x \right) \right\|=\sum_{i=1}^d\left|\mathrm{Ric}\left( e_i,e_i \right)\right|^2$ for an arbitrary orthonormal basis $e_1,\cdots,e_d$ of $T_{\bar{x}}M$. Plugging \eqref{eq:asymp-expan-moment} and \eqref{eq:asymp-expan-ricci-moment} back into \eqref{eq:limsup-intermediate} to conclude that
		\begin{equation}
		\label{eq:upperbound}
		\limsup_{\ell_N\rightarrow\infty}\frac{W_1 \left( E_N \right)}{c \left( x_1,\cdots,x_{\ell_N} \right)}\leq 1-\frac{\eps^2}{2 \left( d+2 \right)}\mathrm{Scal}\left( \bar{x} \right)+O \left( \eps^3 \right).
		\end{equation}

	\end{proof}
	
	Theorem~\ref{thm:coarse-scalar-curvature-riem-manifold} indicates that coarse scalar curvature asymptotically upper bounds the scalar curvature of the Riemannian manifold, when the hypergraph is constructed from uniformly sampling the manifold in a natural way. This justifies the nomenclature of "scalar curvature" in our definition. We conjecture that coarse scalar curvature also asymptotically lower bounds the scalar curvature in the same setting, but will have to leave that for future work.

	\section{Conclusion and future work}
	\label{sec:Conclusion}
	In this paper, we propose a novel definition of curvature for hypergraphs by generalizing coarse Ricci curvature to coarse scalar curvature through multi-marginal optimal transport. Our definition is shown to be consistent with graph curvature in that (i) it reduces to graph curvature if the hypergraph of interest is indeed a simple graph, and (ii) it shares several properties with graph curvature. In particular, it is experimentally observed that, analogous to graph curvature, hypergraph curvature can be used to determine the bridge between components in the network. We are currently computing hypergraph curvature in real-world hypergraph networks (in particular co-authorship or cellular networks)  to observe this centrality property of hypergraph curvature. We are also applying hypergraph curvature to characterize dynamic effects in large dynamic network (in particular, financial network). Intuitively, hypergraph curvature provides a computational method for detecting changes in dynamic networks,  characterizing fast evolving network components, as well as identifying stable network region. On the theoretical side, we are interested in gaining better understandings of our hypergraph curvature with deeper insights from differential geometry.


	\bibliographystyle{IEEEtran}
	\bibliography{bibliography}

\end{document}